
\documentclass[preprint,12pt]{elsarticle}




\usepackage{enumitem}
\usepackage{amssymb}
\usepackage{amsthm}
\usepackage{amsmath}
\usepackage{color}
\usepackage{siunitx}

\sisetup{
  per-mode=fraction,  
  exponent-product=\cdot,  
  output-decimal-marker=.,  
  separate-uncertainty=true  
}
\DeclareSIUnit{\mile}{mile}

\usepackage{hyperref}
\usepackage{graphicx}
\usepackage{subcaption}
\usepackage{tabularx}
\usepackage{booktabs}
\usepackage{float}

\newtheorem{lemma}{Lemma}[section]
\newtheorem{proposition}{Proposition}[section]

\newtheorem{definition}{Definition}


\journal{Physica A: Statistical Mechanics and its Applications}

\begin{document}

\begin{frontmatter}



\title{Hysteresis Behind A Freeway Bottleneck With Location-Dependent Capacity}


\author[inst1]{Alexander Hammerl\corref{cor1}}
\cortext[cor1]{Corresponding author. Email: asiha@dtu.dk}

\affiliation[inst1]{organization={Technical University of Denmark, Department of Technology, Management and Economics},
            addressline={Bygningstorvet 358}, 
            city={Kgs. Lyngby},
            postcode={2800}, 
            country={Denmark}}

\author[inst1]{Ravi Seshadri}
\author[inst1]{Thomas Kjær Rasmussen}
\author[inst1]{Otto Anker Nielsen}

\begin{abstract}
Macroscopic fundamental diagrams (MFDs) and related network traffic dynamics models have received both theoretical support and empirical validation with the emergence of new data collection technologies. However, the existence of well-defined MFD curves can only be expected for traffic networks with specific topologies and is subject to various disturbances, most importantly hysteresis phenomena. This study aims to improve the understanding of hysteresis in Macroscopic Fundamental Diagrams and Network Exit Functions (NEFs) during rush hour conditions. We apply the LWR theory to a highway corridor featuring a location-dependent downstream bottleneck to identify a figure-eight hysteresis pattern, clockwise on the top and counter-clockwise on the bottom. Our empirical observations confirm the occurrence of counter-clockwise loops in real conditions, an effect which we can attribute to demand asymmetries through theoretical analysis. The paper discusses the impact of the road topology and demand patterns on the formation and intensity of hysteresis loops analytically. To substantiate these findings, we analyze empirical MFD data from two bottlenecks and present statistical evidence that, under otherwise identical conditions, a continuous bottleneck causes less hysteresis than a discontinuous one. We conduct numerical experiments using the Cell Transmission Model (CTM) to show that even a slight reduction in the capacity of the homogeneous section can significantly decrease MFD hysteresis while maintaining outflow at the corridor's downstream end. These reductions can be achieved with minimal intervention through standard traffic control measures, such as dynamic speed limits or ramp metering.
\end{abstract}

\begin{highlights}
\item Show figure-eight hysteresis in link MFD using LWR theory under rush-hour conditions
\item Analyze impact of demand patterns and road layouts on MFD hysteresis
\item Prove triangular FD with high jam density maximizes hysteresis
\item Empirical validation of the influence of bottleneck geometry on the extent of hysteresis
\item Numerical simulations show how bottleneck configuration leads to less hysteresis and smother traffic
\end{highlights}

\begin{keyword}
Traffic Flow Theory  \sep Macroscopic Fundamental Diagram \sep Network Exit Function \sep LWR Theory \sep Continuous Bottleneck \sep Hysteresis
\PACS 0000 \sep 1111
\MSC 0000 \sep 1111
\end{keyword}

\end{frontmatter}


\section{Introduction}
\label{sec:introduction}
\cite{God69} is considered the first author to postulate a strictly functional, uni-modal relationship between average flow and density in urban traffic networks. After traffic flow theory \cite{DagGer08} and empirical data \cite{GerDag08} have more recently confirmed the existence of such relationships, their estimation and analysis have gained considerable popularity. Such relationships are commonly referred to as Macroscopic Fundamental Diagrams (MFDs). A similar relationship, known as the Network Exit Function (NEF), has also been demonstrated between trip completion rate and average density \cite{Dag07}. 

In a recent review paper, \cite{johal21} categorize the requirements for the existence of MFDs as follows: demand homogeneity, road homogeneity, and control homogeneity. In particular, assuming demand and road homogeneity is rarely feasible in practical transport network modeling without significant restrictions.  \cite{builad09} show that the inhomogeneity of density within a network during rush hour can influence the relationship between traffic flow and density in such a way that clockwise hysteresis loops may form. Next, \cite{masal10} and \cite{yial10} and later \cite{mahal13}, \cite{leqal15} and \cite{Yuan2023} confirmed that congestion distribution influences the emergence of MFD hysteresis loops. The significance of congestion distribution for the MFD's shape and scatter has sparked interest in how this can be appropriately addressed (e.g. \cite{yiger12}). \cite{Knoop2015} pursue a generalized approach, where network flow is modeled as a multivariate function of both average density and vehicle distribution. \cite{Xu2023} focus on explaining hysteresis loops in NEFs under the assumption of a single discontinuous bottleneck, both on a single corridor and for idealized entire networks. They identify the potential occurrence of counter-clockwise hysteresis loops and figure-eight hysteresis loops, but do not consider the influence of congestion effects on these phenomena. 

Other sources attribute the occurrence of hysteresis loops to the topology of the respective road networks. As early as \cite{builad09} and \cite{yial10}, it was indicated that hysteresis effects are to be expected in freeway networks. \cite{gersun11}, \cite{sabmah12}, and \cite{sabmah13} provide a detailed discussion about freeway MFDs. Daganzo (\cite{Daganzo2011}) observes that a corridor's MFD "will exhibit clockwise hysteresis during any demand-driven episode of bottleneck queuing". Bottlenecks in traffic flow can stem from a broad variety of causes, and empirical observations for many of them suggest a continuous transition between regular traffic conditions and the reduced capacity at the bottleneck, including merges \cite{Daamen2010}, diverges \cite{Munoz2002} and lane drops \cite{Bertini2005}. Therefore, in this article, we study hysteresis phenomena in a freeway corridor with location-dependent capacity, where we model the capacity of the bottleneck(s) as a non-increasing function in space, $q_{bn}(x)$, rather than adopting the more common approach of representing the bottleneck as a discontinuous drop in capacity at a single point. We build on Daganzo's observation of clockwise hysteresis in a corridor MFD and contribute to the literature in several respects. 

First, we apply the LWR theory to a highway corridor featuring a location-dependent downstream bottleneck to identify a figure-eight hysteresis pattern, clockwise on the top and counter-clockwise on the bottom.
A higher average flow at a given average density in the offset of congestion is possible outside the interval of active queuing and in special cases, even when an active queue exists at both time points. The resulting shape of the MFD is a figure eight, with the upper loop moving clockwise and the lower loop moving counter-clockwise. We also discuss the roles of the shape of the fundamental diagram and the pattern of demand on hysteresis and why this general pattern is rare in practical scenarios, where a single clockwise loop is more typical. Second, we analytically characterize the impact of changes in peak demand on the area of the hysteresis loop for the case of a single discontinuous bottleneck and show that the area under the hysteresis curve is maximum for a triangular fundamental diagram with arbitrarily high jam density. Third, we use empirical data from two study sites (Interstate 880 North and Highway 41 in California) and simulations using the Cell Transmission Model (CTM) to validate 1) the theoretical analysis on the shapes of hysteresis loops, and 2) the hypothesis that a bottleneck with gradually decreasing geometry generates less hysteresis under otherwise identical conditions. The latter has useful implications for traffic control, which we demonstrate through extensive experiments using the CTM.     


Although the LWR theory in its general formulation is more expressive than conventional queuing systems,  queueing models provide a widely accepted framework for modeling peak hour traffic, see for example \cite{Vickrey1969}, \cite{Ban2012}, and \cite{Han2013}. Most recently, \cite{Gao2024} present a queuing model that exhibits MFD hysteresis. The seminal works of Luke \cite{Luke1972} and Newell \cite{Newell1993a}, \cite{Newell1993b} demonstrate how physical commonalities between queuing and kinematic wave models can be exploited to simplify the solution of the latter.

MFD hysteresis is not be confused with other phenomena in traffic research that bear the same name: Treiterer and Myers \cite{Treiterer1974} define hysteresis as the separation of speed-density curves into an accelerating and a decelerating branch ahead of traffic disturbances. Among others, Zhang \cite{Zhang1999} and Yeo and Skabradonis \cite{Yeo2009} offer theoretical explanations for this effect.

The rest of the article is structured as follows: Section\ref{sec:model} outlines the fundamentals of LWR theory and formulates models for road segment and incoming traffic. Section \ref{sec:qualitative} provides an analytical derivation of the qualitative time-dependent relationships. Section \ref{subsec:sensitivity} quantitatively examines the correlation between demand intensity and the extent of temporal asymmetry of macroscopic variables. Section \ref{sec:simulation} demonstrates the significance of theoretically derived hysteresis effects through realistic numerical examples and discusses control-relevant adjustments of the road geometry for reducing this undesirable phenomenon. Finally, Section \ref{sec:conclusion} compares the results with existing theoretical and empirical studies on MFD dynamics.

\section{Model and General Solution}
\label{sec:model}
The contemporary formulation of the LWR theory (\cite{ligwhi55}, \cite{ric56}) can be summarized as follows. If traffic progresses in the direction of increasing $x$ and $x_1 > x_2$, then the integral conservation of vehicles can be expressed as:
\begin{equation}
    \frac{\partial}{\partial t} \int_{x_1}^{x_2} k(x,t) \, dx + q(x_2,t) - q(x_1,t) = 0.
\end{equation}
Here, $q(x,t)$ is the flow rate of traffic at position $x$ and time $t$, connected to the cumulative flow $N(x,t)$ by:
\begin{equation}
    q(x,t) = \frac{\partial N}{\partial t}(x,t),
\end{equation}
The spatial derivative of the negative cumulative flow provides the density $k(x,t)$:
\begin{equation}
    k(x,t) = -\frac{\partial N}{\partial x}(x,t).
\end{equation}
If $k$ is differentiable, the conservation law can be expressed by differentiating $N$ with respect to time and space, leading to the partial differential equation:
\begin{equation}
    \frac{\partial k}{\partial t} + \frac{\partial q}{\partial x} = 0.
    \label{eq:conservation}
\end{equation}
In addition, the LWR theory assumes the existence of a fundamental relationship \( Q \), which might vary with location $x$ but not with time $t$:
\begin{equation}
	q(x,t)=Q(x,k(x,t)).
	\label{eq:fundamental}
\end{equation}
On substituting equation \ref{eq:conservation} into \ref{eq:fundamental}, we obtain 
\begin{equation}
    \frac{\partial k}{\partial t} + \frac{\partial Q}{\partial k} \cdot \frac{\partial k}{\partial x} = 0,
\end{equation}
which defines a unique solution for $k(x,t)$ and $q(x,t)$ for given initial and boundary conditions if $k$ is differentiable.

In cases where $k$ has a discontinuity at $(x,t)$, known as a shockwave, the shockwave's speed $u$ is specified as:
\begin{equation}
    u = \frac{[q]}{[k]} = \frac{q_2 - q_1}{k_2 - k_1}.
\end{equation}
Traffic flows through a link of length \( l \) with a continuous bottleneck starting at \( x_0 \) and extending until the end of the link at \( l \). The capacity of the bottleneck \( q_{bn}(x) \), for \( x_0 \leq x \leq l \), is a non-increasing function of \( x \) with \( q_{bn}(x_0) \) being equal to the capacity according to the fundamental diagram of the link. For notational convenience, we also define \( k_{bn} = \max \{ q^{-1}(q_{bn}) \} \).
Let \( A(t) \) be the accumulation, representing the total number of vehicles on the link at time \( t \). 
Let \( \bar{q}(t) \) be the average flow on the link at time \( t \), and let \( P(t) \) denote the flow at the downstream end at time \( t \). The trajectory of the tail of the queue in space-time is denoted by \( \psi(t) \). The upstream boundary flow $q(0,t)$ adheres to a trapezoidal, piece-wise linear function with maximum boundary flow $q_p$, i.e. 

\begin{equation}
	q(0,t) = 
	\begin{cases} 
		q_b + a \cdot t, & \text{for } 0 \leq t \leq \frac{q_p-q_b}{a}, \\
		q_p, & \text{for } \frac{q_p-q_b}{a} \leq t \leq t_{off}, \\
		q_p - b \cdot (t_{e} - t), & \text{for } t_{off} \leq t \leq \frac{q_p}{q_e \cdot b}+t_{off}, \\
		q_e, & \text{for } \frac{q_p}{q_e \cdot b}+t_{off} \leq t \leq \infty.
	\end{cases}
\end{equation}

for suitably chosen parameters \( q_b \) (initial flow), \( q_p \) (peak flow), \( q_e \) (end flow), $t_{off}$ (begin of the offset of congestion), \( a \) (flow increase rate at the onset of congestion), and \( b \) (flow reduction rate at the offset of congestion). We also define \( t_{\text{max}} := \arg \max_t \{ A(t) \} \).

The validity of our analysis does not require the corridor to provide sufficient space for the expanding queue; consequently, the queue may extend to position x=0. This represents the analytically degenerate, yet practically common case where the queue propagates beyond the position of the most upstream active detector. In this scenario, the flow at position x=0 reduces to the bottleneck flow, while all other properties of the solution remain unchanged.
 
In \ref{sec:general_solution}, we prove two lemmas that together yield a computationally convenient method for a general analytical solution of the model. It is not essential for the understanding of the following material and may be omitted if desired.

\section{Shapes of Hysteresis Loops}
\label{sec:qualitative}
We analyze how the figure-eight hysteresis pattern emerges in a link's Macroscopic Fundamental Diagram (MFD), starting with the simplest case: a single bottleneck at the downstream end of the link, during periods when there's an active queue behind this bottleneck.

Let's consider two moments in time, \(t_1\) and \(t_2\), where \(t_2\) comes after \(t_1\), and the total number of vehicles on the link (accumulation \(A\)) is identical at both times: \(A(t_1) = A(t_2)\). At \(t_1\), because the queue is growing, the flow just upstream of the queue must exceed bottleneck capacity. Conversely, at \(t_2\), as the queue is dissipating, the flow just upstream of the queue must be lower than bottleneck capacity. Within the queue itself, the flow remains constant at bottleneck capacity (\(q_{\text{bn}}(l)\)) at both times. Hence, the average flow at \(t_2\) upstream of the bottleneck is lower than the average at \(t_1\).

As shown in Figure \ref{figure:bn_schematic_mfd}, this relationship persists regardless of overlapping traffic states between times \(t_1\) and \(t_2\), and it extends to continuous bottlenecks.

\begin{figure}[H]
    \centering
    \begin{subfigure}[t]{0.45\textwidth}
        \centering
        \includegraphics[width=\textwidth]{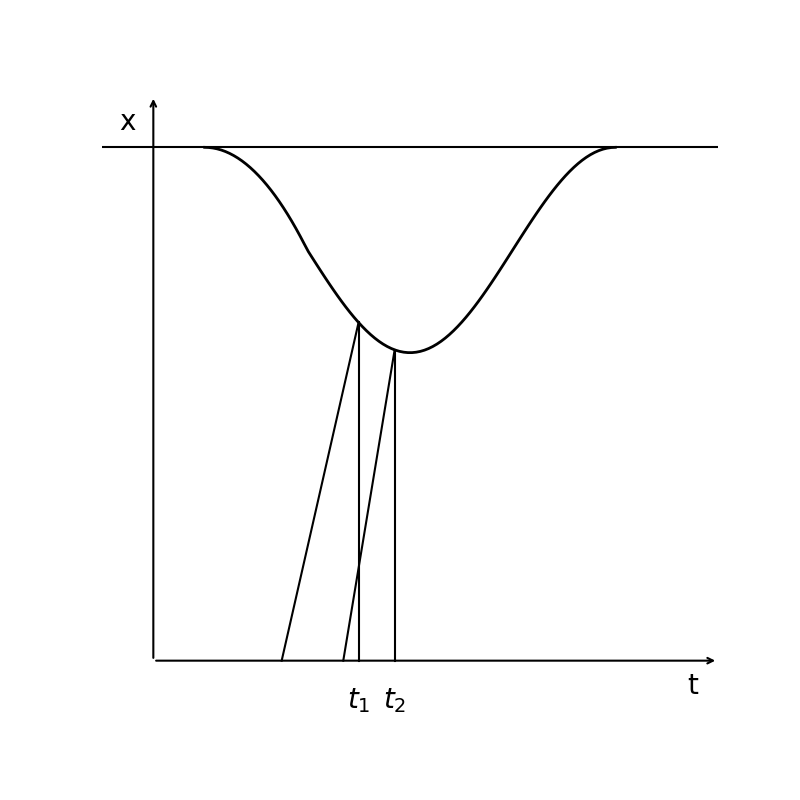}
        \caption{A scenario where the upstream local flow \(q(0, t_1)\) is smaller than the flow \(q(\psi(t_2), t_2)\). Nevertheless, the average at \(t_1\) is larger, since \(\bar{q}\) decreases in this interval while \(A\) remains constant.}
        \label{fig:1bn_overlapping}
    \end{subfigure}
    \hfill
    \begin{subfigure}[t]{0.45\textwidth}
        \centering
        \includegraphics[width=\textwidth]{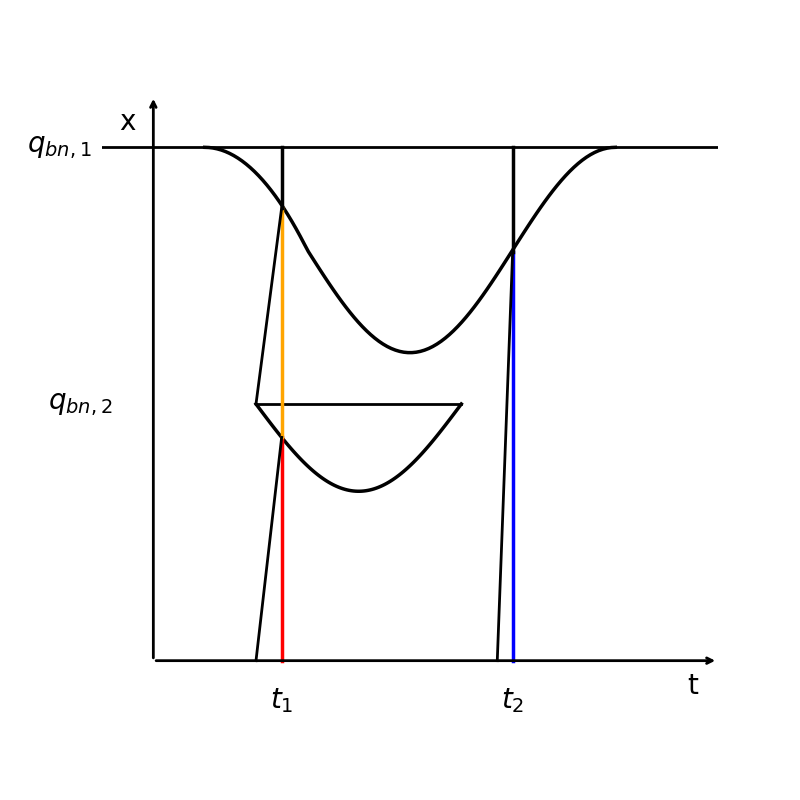}
        \caption{In the figure, for all points \((x,t)\) colored in red: \(q(x,t) > q_{\text{bn},2} > q_{\text{bn},1}\), for all points colored in orange \(q(x,t) = q_{\text{bn},2} > q_{\text{bn},1}\), and for all points colored in blue \(q_{\text{bn},1} > q(x,t)\). Since the minimum local flow at \(t_1\) is at least as high as the maximum flow at \(t_2\), the average at \(t_1\) is also higher.}
        \label{fig:2bn_schematic}
    \end{subfigure}
    \caption{Comparison of $\bar{q}$ between time points with equal accumulation}
    \label{figure:bn_schematic_mfd}
\end{figure}

When the period of observation is extended to the entire modeled interval, the shape of the curve is determined by the following theorem:

\begin{proposition}
\label{proposition:MFD}
The relationship between \( A(t) \) and \( \bar{q}(t) \) forms a figure-eight loop under the following conditions:
\begin{enumerate}
    \item The fundamental diagram is not a straight line in the considered interval.
    \item At least one of these must be true:
    \begin{enumerate}
        \item The inflow rate (\(a\)) is greater than the outflow rate (\(b\)), and by the end time (\(t_e\)), all congestion has cleared from the road section.
        \item The final flow rate (\(q_e\)) is higher than the initial flow rate (\(q_b\)).
    \end{enumerate}
\end{enumerate}

This figure-eight pattern can also show up in intervals of active queuing. If these conditions are not met, the relationship describes a simple clockwise loop.
\end{proposition}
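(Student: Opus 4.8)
The plan is to partition the time horizon into the phases imposed by the trapezoidal boundary profile and the life cycle of the queue, to describe the trajectory $t\mapsto(A(t),\bar{q}(t))$ on each phase with the closed-form solution, and then to locate a transversal self-intersection whose two sub-loops carry opposite orientations. First I would invoke the two lemmas of \ref{sec:general_solution} to write $k(x,t)$ explicitly and to separate: (i) a free-flow loading phase, before the bottleneck affects the upstream density; (ii) an active-queuing phase, where a congested region carrying flow $q_{bn}(l)$ occupies $[\psi(t),l]$ while the link upstream of $\psi(t)$ carries the delayed boundary flow; (iii) a free-flow recovery phase, after the queue has vacated the link; and (iv) the terminal homogeneous state at flow $q_e$. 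On each phase $A(t)=\int_0^l k(x,t)\,dx$ and $\bar{q}(t)=\tfrac1l\int_0^l q(x,t)\,dx$ become explicit integrals of the piecewise-linear boundary flow; since $A$ is unimodal in $t$ (rising to $t_{\text{max}}$, then falling), the trajectory splits into an onset branch $\bar{q}=\bar{q}_1(A)$ for $t\le t_{\text{max}}$ and an offset branch $\bar{q}=\bar{q}_2(A)$ for $t\ge t_{\text{max}}$, each single-valued in $A$.

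Next I would prove the two monotonicity facts that fix the orientations. Inside phase (ii), for $t_1<t_2$ with $A(t_1)=A(t_2)$, the argument indicated around Figure~\ref{figure:bn_schematic_mfd} gives that every local flow at $t_1$ is at least $q_{bn}(l)$ while every local flow at $t_2$ is at most $q_{bn}(l)$, with strict inequality on sets of positive length, so $\bar{q}(t_1)>\bar{q}(t_2)$; hence the onset branch lies strictly above the offset branch at every accumulation visited while the queue is active, and that portion of the curve runs clockwise. For the reverse ordering I would restrict to free-flow profiles and use that the fundamental relation is concave and, by hypothesis 1, not affine over the free-flow states actually visited; Jensen's inequality then gives that on a non-uniform free-flow profile $\bar{q}(t)$ lies strictly below the fundamental curve evaluated at $A(t)/l$, by a gap that increases with the spread of flow values the segment is carrying, which in turn increases with the modulus of the slope of the boundary ramp feeding the segment. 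Under hypothesis 2(a), $a>b$ forces the offset branch strictly above the onset branch at a common small accumulation; under hypothesis 2(b), the terminal state $(A_e,q_e)$ sits on the fundamental curve whereas the onset branch at accumulation $A_e$ sits strictly below it, so the offset branch again lies above the onset branch there --- the reverse of the ordering inside active queuing.

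Combining the two facts, the quantity $\bar{q}_2(A)-\bar{q}_1(A)$ is negative at accumulations where the queue is active and positive at small accumulations, so by the intermediate value theorem it vanishes in between, giving a transversal crossing of the two branches. That crossing is the waist of a figure eight whose upper loop --- the one enclosing the active-queuing states --- is clockwise and whose lower loop is counter-clockwise. If neither 2(a) nor 2(b) holds, the same facts force $\bar{q}_2(A)\le\bar{q}_1(A)$ at every shared accumulation, so no crossing occurs and the clockwise orientation extends to the whole curve: a single clockwise loop. For the claim that the figure eight can also appear with both times in active queuing, I would reproduce the crossing inside phase (ii) for a strictly decreasing $q_{bn}(\cdot)$, where the congested region is itself inhomogeneous so that the hypothesis behind Figure~\ref{fig:2bn_schematic} (minimum local flow at the earlier time at least the maximum at the later time) fails for sufficiently asymmetric demand, reversing the average-flow ordering for a suitable pair of times.

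I expect the real obstacle to be establishing existence and transversality of the waist: one must track the accumulation levels at the phase boundaries --- when the queue forms, when it clears, and $A(t_{\text{max}})$ --- confirm that the onset and offset branches share a range of accumulations containing both a point with $\bar{q}_2>\bar{q}_1$ and a point with $\bar{q}_2<\bar{q}_1$, and verify the branches meet transversally rather than tangentially; this bookkeeping must be carried out uniformly over the subcases (whether the queue ever reaches $x=0$, whether $q_e$ lies above or below $q_{bn}(l)$, and whether the peak plateau lasts long enough for the profile to homogenize before the offset). The orientation statements themselves follow immediately once the two monotonicity inequalities are in hand.
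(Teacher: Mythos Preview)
Your high-level plan---split into onset and offset branches, show the clockwise inequality inside active queuing via the pointwise bounds of Figure~\ref{figure:bn_schematic_mfd}, show the reverse inequality on free-flow profiles via concavity, and locate the waist by the intermediate value theorem---matches the paper's architecture. Two substantive differences are worth flagging.

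First, for the counter-clockwise inequality you invoke Jensen together with the heuristic that the Jensen gap ``increases with the spread of flow values,'' and that this spread is governed by the modulus of the boundary slope. That heuristic is the right intuition but is not a theorem: two density profiles with equal mean and different variances need not have ordered Jensen gaps. The paper closes this gap with a majorization argument---it proves that under $a>b$ the onset density vector majorizes the offset one, and then applies Karamata's inequality (the Schur-concavity of $k\mapsto\sum q(k_i)$) to obtain $\bar{q}(t_2)>\bar{q}(t_1)$ rigorously. Your Jensen step would need exactly this lemma to become a proof; the paper also uses a direct Jensen/concavity chain, but only for the converse case (conditions fail $\Rightarrow$ pure clockwise), where one of the two profiles has a homogeneous uncongested part so the comparison is against a constant.

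Second, you establish that the sign of $\bar{q}_2(A)-\bar{q}_1(A)$ changes, hence a crossing exists, but you do not argue uniqueness of the crossing. The paper supplies a separate monotonicity step (illustrated in Figure~\ref{figure:acc_increase}): starting from a pair with equal $A$ and equal $\bar q$, as $A$ increases the queue grows faster on the offset side than on the onset side, so $\bar q$ rises more along the onset branch; this rules out a return to the counter-clockwise regime and pins down a single waist, i.e.\ a genuine figure eight rather than a more complicated self-intersecting curve. Your transversality bookkeeping would have to reproduce this.

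Finally, for the claim that the figure eight can occur with both times in active queuing, the paper does not rely on inhomogeneity of the congested region under a strictly decreasing $q_{bn}$ as you propose. Instead it gives an explicit construction: pick $t_2$ just before the most downstream queue dissolves and $t_1$ before its onset with $A(t_1)=A(t_2)$ and $\bar q(t_1)<\bar q(t_2)$, then insert an additional upstream bottleneck whose capacity is tuned so that its queue begins arbitrarily shortly before $t_1$; this perturbs $A(t_1)$ not at all and $\bar q(t_1)$ only marginally, so the inequality survives while both times now lie in active queuing. Your route via failure of the hypothesis behind Figure~\ref{fig:2bn_schematic} may also work, but it is less explicit and would need its own argument.
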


The possibility of counter-clockwise partial loops is based on the fact that, if \(a > b\) holds, the distribution of vehicles across the corridor is more uniform during the offset of congestion, which at a fixed vehicle count implies a higher average flow. An intuitive proof approach is as follows: 

First, we introduce the following defintion.
\begin{definition}[Majorization, see \cite{Marshall2011}]
Let \(\mathbf{x} = (x_1, x_2, \ldots, x_n)\) and \(\mathbf{y} = (y_1, y_2, \ldots, y_n)\) be vectors in \(\mathbb{R}^n\). Define \(\mathbf{x}^\downarrow\) and \(\mathbf{y}^\downarrow\) as the vectors obtained by sorting \(\mathbf{x}\) and \(\mathbf{y}\) in descending order, respectively, such that:
\[
x^\downarrow_1 \geq x^\downarrow_2 \geq \cdots \geq x^\downarrow_n, \quad 
y^\downarrow_1 \geq y^\downarrow_2 \geq \cdots \geq y^\downarrow_n.
\]
We say that \(\mathbf{x}\) majorizes \(\mathbf{y}\), denoted by \(\mathbf{x} \succ \mathbf{y}\), if the following conditions hold:
\begin{enumerate}
    \item \(\sum_{i=1}^k x^\downarrow_i \geq \sum_{i=1}^k y^\downarrow_i, \quad \forall k = 1, 2, \ldots, n-1,\)
    \item \(\sum_{i=1}^n x^\downarrow_i = \sum_{i=1}^n y^\downarrow_i.\)
\end{enumerate}
\end{definition}

Next, we discretize the corridor into \(n\) cells, though the proof can be trivially extended to a continuous spatial variable. When at \(t_1\) and \(t_2\) no queue is active, and \(A(t_1) = A(t_2)\) and \(a > b\) hold, then \(q(1, t_1) \leq q(2, t_2)\) and \(q(0, t_1) \geq q(2, t_2)\) must hold. We discretize the corridor spatially into \(n\) cells. Then the following holds:

\begin{lemma}
The vector \(K_2:=\big(k(0, t_2), k(1, t_2), \ldots, k(n-1, t_2)\big)\) majorizes \(K_1 :=\big(k(0, t_1), k(1, t_1), \ldots, k(n-1, t_1)\big)\).
\end{lemma}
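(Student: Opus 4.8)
The plan is to recover both density profiles from the single upstream boundary trace \(q(0,\cdot)\) and then compare them with a majorization argument. \emph{Step 1 --- the two profiles as images of the boundary ramp.} Since no queue is active at \(t_1\) or at \(t_2\), the whole link is in free flow at both instants, so every cell state lies on the uncongested branch and is carried there by a characteristic emanating from \(x=0\). Writing \(\kappa\) for the increasing inverse of the uncongested branch, there is for each time \(t_\alpha\) a strictly decreasing ``emission-time'' map \(i\mapsto\sigma_\alpha(i)\) with \(k(i,t_\alpha)=\kappa\big(q(0,\sigma_\alpha(i))\big)\), since downstream cells carry older boundary information. Because \(q(0,\cdot)\) is non-decreasing throughout the onset and non-increasing throughout the offset, in the relevant range this forces \(K_1\) to be a non-increasing vector and \(K_2\) a non-decreasing one, so that \(K_1^\downarrow\) is \(K_1\) read left-to-right and \(K_2^\downarrow\) is \(K_2\) read right-to-left. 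The inequalities \(q(1,t_1)\le q(2,t_2)\le q(0,t_1)\) quoted just before the lemma can be read as the \(m=1\) instances of the comparison below.

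\emph{Step 2 --- matching the profiles.} The sum condition in the definition of majorization is immediate: \(\sum_i k(i,t_1)=A(t_1)=A(t_2)=\sum_i k(i,t_2)\). For the partial-sum condition I would, for each \(m\), write the sum of the \(m\) largest densities at \(t_\alpha\) as a sum of \(\kappa\big(q(0,\cdot)\big)\) over the sub-window of emission times corresponding to the \(m\) extreme cells, anchored at \(x=0\) for \(t_1\) and at \(x=l\) for \(t_2\). These two sub-windows have the same spatial extent and, up to the local wave speed, the same temporal extent, and they sit around the common total fixed by \(A(t_1)=A(t_2)\). The hypothesis \(a>b\) now enters exactly once: the onset ramp is steeper than the offset ramp, so the windowed partial sums at one of the two instants deviate uniformly more from the common mean than those at the other, which is precisely the partial-sum inequality needed for every \(m\). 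Carrying this through yields the majorization relation between \(K_1\) and \(K_2\) asserted in the lemma; the continuous-space version follows verbatim with sums replaced by integrals and the vectors by their monotone rearrangements.

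\emph{Main obstacle.} The delicate part is Step 2: making the ``steeper ramp is more peaked'' comparison rigorous, uniformly in \(m\) and with the correct orientation (the direction of the majorization is governed by the sign of \(a-b\)), when \(q(0,\cdot)\) is only piecewise linear. One must handle the kinks at the ends of the onset and offset ramps; the possibility that a plateau of \(q(0,\cdot)\) (at \(q_b\), \(q_p\), or \(q_e\)) has already reached some cells at \(t_1\) or \(t_2\), which can spoil monotonicity of a profile and hence the clean identification of \(K^\downarrow\); the non-constant characteristic speed for a general concave fundamental diagram, so that equal spatial windows need not be equal temporal windows and \(\kappa\) is convex rather than affine --- here the non-straight-line hypothesis of Proposition~\ref{proposition:MFD} is what keeps the two profiles from coinciding; and the degenerate but common case in which the growing queue truncates the effective upstream boundary to the bottleneck flow \(q_{bn}\). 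In each sub-case the reconstruction of Step 1 still applies, so the statement reduces to a one-dimensional rearrangement inequality that can be verified directly, but assembling these pieces into a single uniform-in-\(m\) estimate is the real work.
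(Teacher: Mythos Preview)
Your setup in Step~1 is sound and matches the paper: both profiles are pulled back to the boundary trace via characteristics, \(K_1\) is non-increasing in cell index, \(K_2\) is non-decreasing, and the equal-sum condition comes for free from \(A(t_1)=A(t_2)\).

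The gap is Step~2. What you write there is the \emph{intuition} (``steeper ramp deviates more from the common mean''), not an argument: you never actually establish the partial-sum inequality for a general \(m\), and your ``Main obstacle'' paragraph is candid that this is where the work lies. The paper sidesteps the direct uniform-in-\(m\) estimate you are worried about by a short contradiction argument: take the smallest \(m\) at which the partial-sum inequality fails; this forces the \(m\)-th sorted entries to satisfy a pointwise inequality; then the single observation that \emph{per spatial step the density changes more at \(t_1\) than at \(t_2\)} (this is exactly where \(a>b\) enters) propagates that pointwise inequality to every \(m'>m\); summing to \(m'=n\) contradicts \(\sum K_1=\sum K_2\). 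This ``once crossed, stays crossed'' mechanism is the idea you are missing, and it is what lets the paper avoid the case analysis (kinks, plateaus, variable wave speed) that you flag as the hard part: the propagation step only needs a one-cell differential comparison, not a global integrated one.

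A secondary point: you hedge on the direction of the majorization (``governed by the sign of \(a-b\)''). Your own heuristic in Step~2 --- that the steeper onset profile is more spread out --- actually pins down the direction, and you should commit to it; the downstream application via Karamata only yields the desired inequality \(\bar q(t_2)\ge\bar q(t_1)\) for one orientation, so leaving this open is not harmless.
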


\begin{proof}
\(K_{1,i}^\downarrow\) is obtained by sorting \(K_1\) in the upstream direction, and \(K_{2,i}^\downarrow\) is obtained by sorting \(K_2\) in the downstream direction. For a contradiction, assume that there exists an \(m\) such that 
\[
\sum_{i=1}^m K_{1,i}^\downarrow > \sum_{i=1}^m K_{2,i}^\downarrow.
\]
For this inequality to hold, it must be the case that \(K_{1,m}^\downarrow \geq K_{2,m}^\downarrow\). Now let \(m' > m\). Then \(K_{1,m'}^\downarrow > K_{2,m'}^\downarrow\) holds, since, when moving by a constant step in space, the flow at \(t_1\) changes more strongly than at \(t_2\) (see Figure~\ref{figure:maj_demonstrative} for a representative numerical example). Consequently,
\[
\sum_{i=1}^n K_{1,i} > \sum_{i=1}^n K_{2,i}
\]
must hold, which contradicts the assumption.
\end{proof}

\begin{figure}[H]
    \centering
    \includegraphics[width=\linewidth]{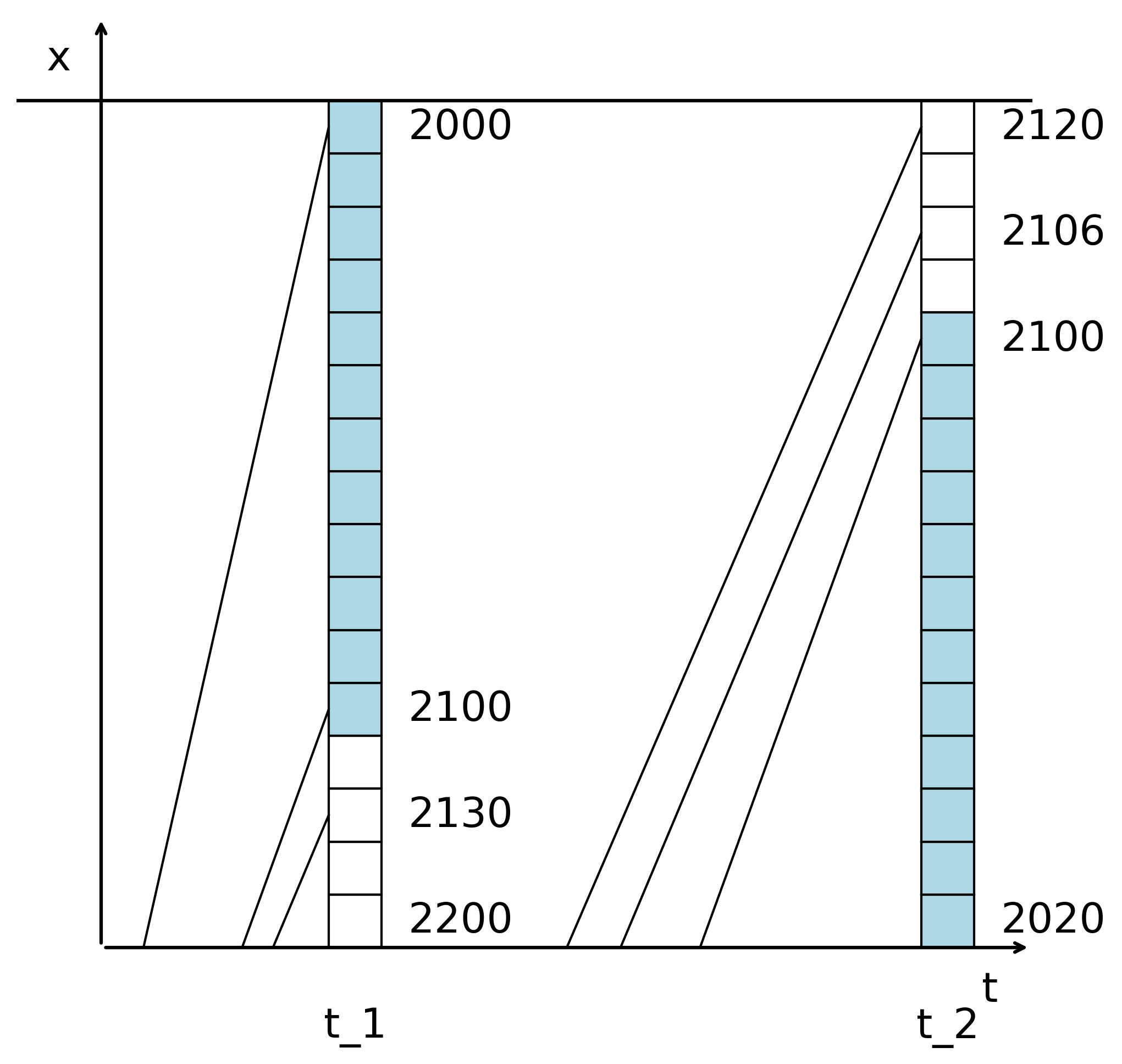}
    \caption{Straight lines in the figure represent characteristic waves along which the flow remains constant. For a contradiction, assume that the number of vehicles in the blue-marked cells at $t_1$ is higher than at $t_2$. This would require the flow in the respective final blue cell to be higher at $t_1$ than at $t_2$ (here 2110 vs 2100). Additionally, the vehicle count in the white cells would need to be lower at $t_1$. However, this is impossible since the flow in the corresponding downstream boundary region is consistently higher at $t_1$ (as indicated by flows of 2130 and 2106 respectively).}
    \label{figure:maj_demonstrative}
\end{figure}

Using this characterization, we can apply the following proposition from convex analysis to $\bar{q}(t_1)$ and $\bar{q}(t_2)$, which implies the desired result $\bar{q}(t_2) > \bar{q}(t_1)$.


\begin{lemma}[Karamata's inequality, see \cite{Marshall2011}]
Let $f$ be a concave function, and let $x_1, x_2, \ldots, x_n$ and $y_1, y_2, \ldots, y_n$ be two sequences of real numbers such that $x_1 \geq x_2 \geq \cdots \geq x_n$ and $y_1 \geq y_2 \geq \cdots \geq y_n$. If the sequence $\mathbf{x} = (x_1, x_2, \ldots, x_n)$ majorizes the sequence $\mathbf{y} = (y_1, y_2, \ldots, y_n)$, then:
\[
\sum_{i=1}^n f(x_i) \leq \sum_{i=1}^n f(y_i).
\]
\end{lemma}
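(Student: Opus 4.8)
The plan is to prove the equivalent inequality $\sum_{i=1}^{n}\bigl(f(x_i)-f(y_i)\bigr)\le 0$. The idea is to linearise each summand by a supporting line of $f$ and then collapse the resulting sum with Abel summation, exploiting the partial-sum form of majorization. Concretely, I would fix for each index $i$ a supergradient $g_i$ of $f$ at $y_i$ — taking, for definiteness, the right-hand derivative $f'_+(y_i)$, which exists at every interior point of the interval on which $f$ is concave. The supporting-line property of concave functions then yields, for every admissible $z$, $f(z)\le f(y_i)+g_i\,(z-y_i)$, and in particular $f(x_i)-f(y_i)\le g_i\,(x_i-y_i)$ regardless of whether $x_i$ is larger or smaller than $y_i$.

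Two structural facts then drive the argument. First, the right derivative of a concave function is non-increasing, so from $y_1\ge y_2\ge\cdots\ge y_n$ we obtain $g_1\le g_2\le\cdots\le g_n$; choosing the \emph{same} one-sided derivative at every point is what keeps this chain intact even when several $y_i$ coincide. Second, setting $d_i:=x_i-y_i$ and $S_k:=\sum_{i=1}^{k}d_i$, the majorization hypothesis (with the two sequences already sorted in descending order) says precisely that $S_k\ge 0$ for $k=1,\dots,n-1$ and $S_n=0$. Summation by parts then gives $\sum_{i=1}^{n} g_i d_i=\sum_{i=1}^{n-1}(g_i-g_{i+1})\,S_i+g_n S_n$; the last term vanishes because $S_n=0$, and each remaining term is a nonpositive factor $g_i-g_{i+1}$ times a nonnegative factor $S_i$, so $\sum_{i=1}^{n} g_i d_i\le 0$. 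Chaining the two bounds gives $\sum_i\bigl(f(x_i)-f(y_i)\bigr)\le\sum_i g_i d_i\le 0$, which is the claim; the degenerate cases $n=1$ (which forces $x_1=y_1$) and any index with $x_i=y_i$ (contributing $0\le 0$) need no separate treatment.

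The step I expect to require the most care is the handling of non-smoothness: one must invoke that a concave function on an interval admits one-sided derivatives everywhere and that these are monotone, and one must pick the supergradients consistently so that the chain $g_1\le\cdots\le g_n$ is not spoiled by ties among the $y_i$. Both points are dispatched by working throughout with, say, the right derivative; and for the fundamental-diagram application — where $f=Q$ is concave on the physically relevant density range — one may simply take $Q$ differentiable there, so the whole subtlety disappears. As a fallback, should the bookkeeping become unwieldy, I would instead use the characterisation of majorization via finitely many ``Robin Hood'' transfers $(u,v)\mapsto(u-\varepsilon,v+\varepsilon)$ with $u>v$: the inequality holds for a single such transfer directly from the definition of concavity, and the general case follows by induction. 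This is the more traditional route, but it requires the (standard) equivalence between the partial-sum and the transfer descriptions of majorization.
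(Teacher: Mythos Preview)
The paper does not prove this lemma; it merely states it with a citation to \cite{Marshall2011} and then applies it. Your proof via supergradients and Abel summation is the standard correct argument for Karamata's inequality, and the care you take with one-sided derivatives to handle non-smoothness and ties among the $y_i$ is appropriate; nothing further is needed.
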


By mapping the terms $k(0, t_1), \ldots, k(n-1, t_1)$ and $k(0, t_2), \ldots, k(n-1, t_2)$ to the sequences $\{x_i\}$ and $\{y_i\}$, respectively, and leveraging the concavity of $q$, we arrive at the conclusion:
\[
\bar{q}(t_2) > \bar{q}(t_1).
\]
If the fundamental diagram takes a linear form in the uncongested part, meaning $q(k)$ is not strictly concave, then the vehicle distribution in the uncongested part of the segment has no influence on the average flow, making counter-clockwise dynamics impossible. This is why Condition 1 needs to be included in the formulation of the proposition.

It remains to show that the pairs constructed in this way, where $\bar{q}(t_1) \leq \bar{q}(t_2)$ holds, can also occur when an active queue exists at both time points. We construct such a pair as follows: we choose $t_2$ as a point in time that lies arbitrarily shortly before the end of the most downstream, and thus longest-active, queue. Then there exists a $t_1$ that lies temporally before the onset of this queue and satisfies the required conditions. 

We modify the geometry by adding a further upstream queue whose capacity is chosen such that it begins arbitrarily shortly before $t_1$. The value of $A(t_1)$ does not change through this manipulation, and $\bar{q}(t_1)$ changes only marginally, so that all conditions mentioned above continue to be satisfied. For illustration, see Figure~\ref{figure:bn_counter}.

Finally, we need to show that after transitioning from the counterclockwise to the clockwise state, the system cannot revert to a counterclockwise regime through further increases in congestion. To analyze this, we compare the behavior when $A$ increases:

\begin{itemize}
    \item $t_1$ occurs during the onset of congestion, meaning $A$ increases as time progresses. Since the flow in the uncongested regime decreases with increasing $x$ during this interval, the waves entering the corridor have a higher average flow than the meeting the queue.
    \item $t_2$ occurs during the offset of congestion, so we must go backward in time to achieve an increase in $A$. The flow in the uncongested regime increases with increasing $x$ during this interval; therefore, the waves entering the corridor have a lower average flow than those moving into the queue.
\end{itemize}

\begin{figure}[H]
    \centering
    \begin{subfigure}[b]{0.45\textwidth}
        \centering
        \includegraphics[width=\textwidth]{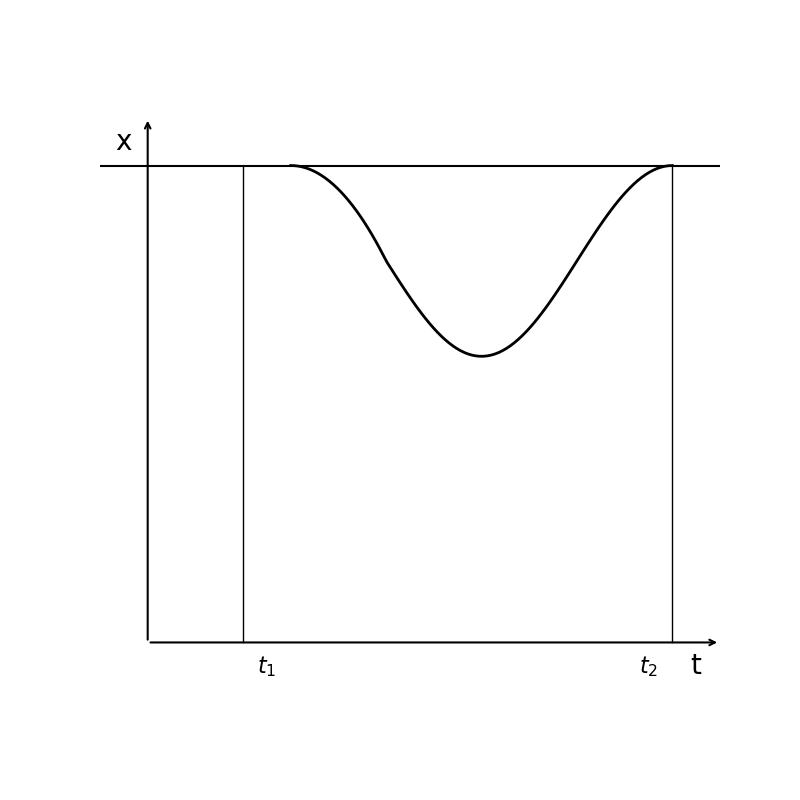}
        \caption{Corridor under initial conditions with $A(t_1)=A(t_2)$ and $\bar{q}(t_1)<\bar{q}(t_2$}
        \label{fig:1bn_counter}
    \end{subfigure}
    \hfill
    \begin{subfigure}[b]{0.45\textwidth}
        \centering
        \includegraphics[width=\textwidth]{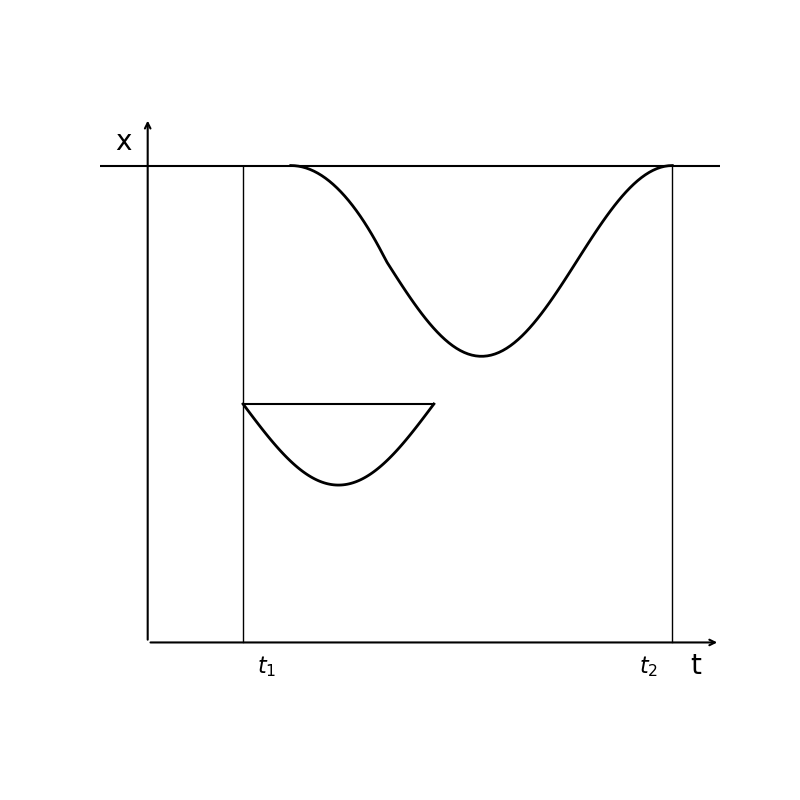}
        \caption{Both $t_1$ and $t_2$ experience active congestion, $A(t_1)=A(t_2)$ and $\bar{q}(t_1)<\bar{q}(t_2)$ still hold}
        \label{figure:2bn_counter}
    \end{subfigure}
    \caption{Construction of a counterclockwise loop during an interval of active queuing.}
    \label{figure:bn_counter}
\end{figure}

Therefore, starting from $t_1$, a smaller proportion of the increase in the number of vehicles contributes to the growth of the queue compared to $t_2$. As a result, the average flow increase is higher from $t_1$ than from $t_2$. Figure \ref{figure:acc_increase} illustrates this relationship with a representative numerical example. 

\begin{figure}[H]
    \centering
    \includegraphics[width=\linewidth]{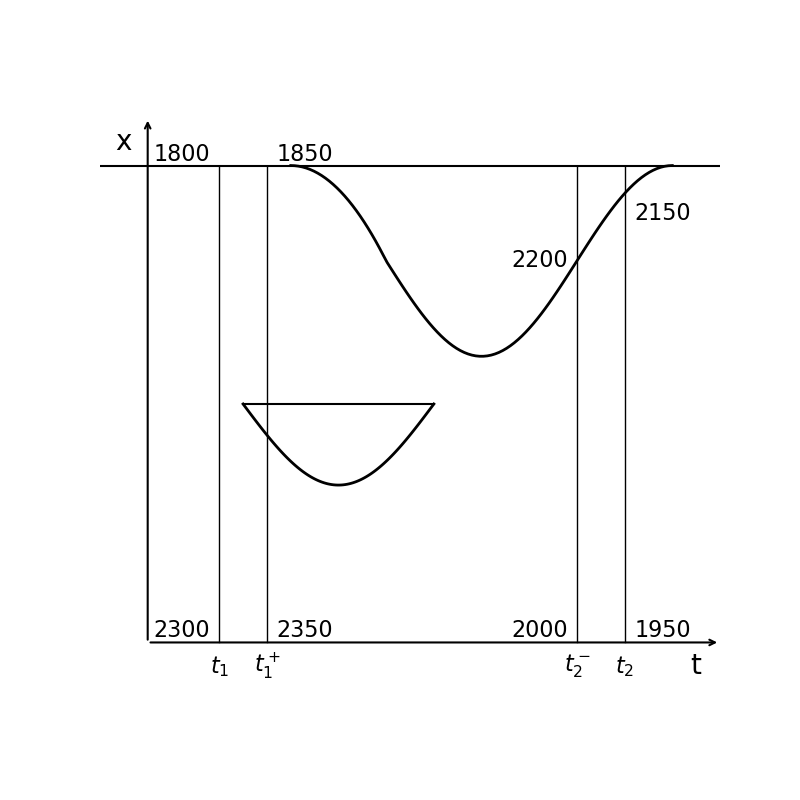}
    \caption{Since the queue grows more strongly between $t_2$ and $t_2^-$ than between $t_1$ and $t_1^+$, the following holds:
$A(t_1) = A(t_2)$, $\bar{q}(t_1) = \bar{q}(t_2)$, $A(t_1^+) = A(t_2^-)$, and $\bar{q}(t_1^+) > \bar{q}(t_2^-)$.}
    \label{figure:acc_increase}
\end{figure}

To finalize the proof of \ref{proposition:MFD}, two special cases require separate consideration:

\begin{itemize}
    \item $a \leq b$ and $q_e > q_b$: In this case, we can select $t_2$ very late, when the queue has completely receded and the flow is homogeneous across the entire corridor. Then we can find a $t_1$ such that $A(t_1) = A(t_2)$ and $t_1 < t_2$, where $t_1$ occurs during the onset of congestion, meaning the flow is not completely homogeneous at this time. From the concavity of $q$, it follows that $\bar{q}(t_2) >= \bar{q}(t_1)$, resulting in a figure-eight shape.
    
    \item $q_e \leq q_b$, and congestion persists at $t_e$: This case differs from those previously analyzed as the vehicle count at the end of congestion is lower than at $t = 0$. Thus, two time points $t_1$ and $t_2$ can only have the same vehicle count if $t_2 \leq t_e$. Under these conditions, $\bar{q}(t_1) \geq \bar{q}(t_2)$ always holds, and the resulting shape is a clockwise loop. This can be proven formally through the following chain of inequalities:

    \begin{align*}
\bar{q}(t_1) &= q\left(\frac{l - \psi(t_2)}{l} \left(\frac{1}{l - \psi(t_2)} \int_0^{l - \psi(t_2)} k(x, t_2) \, dx\right) + \frac{\psi(t_2)}{l} k_{bn}\right) \\
&\geq \left(\frac{l-\psi(t_2)}{l}\right) q\left(\frac{1}{l-\psi(t_2)} \int_0^{l-\psi(t_2)} k(x, t_2) \, dx\right) + \frac{\psi(t_2)}{l} q(k_{bn}) \\
&\geq \left(\frac{l-\psi(t_2)}{l}\right) \left(\frac{1}{l-\psi(t_2)} \int_0^{l-\psi(t_2)} k(x) \, dx\right) + \left(\frac{\psi(t_2)}{l}\right) q(k_{bn}) \\
&= \left(\int_0^{l-\psi(t_2)} q(k(x)) \, dx + \psi(t_2) q(k_{bn})\right)/l = \bar{q}(t_2),
\end{align*}

where the first inequality arises from the definition of concavity, and the second from Jensen's inequality for concave functions.
\end{itemize}

This concludes the proof of Proposition \ref{proposition:MFD}. The dynamics of the Network Exit Flow are described in the following proposition:

\begin{proposition}
The relationship between $A(t)$ and Network Exit Flow $P(t)$ always forms a counter-clockwise hysteresis loop.
\end{proposition}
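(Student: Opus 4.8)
The plan is to prove the orientation-level statement directly: whenever $A(t_1)=A(t_2)$ with $t_1<t_{\max}<t_2$, the downstream flow satisfies $P(t_1)\le P(t_2)$ — at any fixed accumulation the exit flow on the loading branch never exceeds the one on the de-loading branch, which is exactly what it means for the $(A,P)$-trace to run counter-clockwise. To prepare this I would record three structural facts. (i) Since $q_{bn}$ is non-increasing, the capacity that binds first is the one at $x=l$, so there is a (possibly empty) interval $I=[t_q,t_{\mathrm{clear}}]$ during which congestion reaches the downstream end; on $I$ one has $P(t)=q_{bn}(l)=\max_t P(t)$, and for $t\notin I$ the whole corridor is in free flow with $P(t)<q_{bn}(l)$. (ii) On a triangular fundamental diagram the free-flow speed is a constant $v_f$, hence in the free-flow phases $P(t)=q(0,t-T_f)$ with $T_f=l/v_f$, and $A(t)=\int_{t-T_f}^{t}q(0,s)\,ds$. (iii) From $\dot A(t)=q(0,t)-P(t)$ and the trapezoidal shape of $q(0,\cdot)$, $A$ increases on $[0,t_{\max}]$, decreases afterwards, and $t_{\max}\in I$.

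I would then fix a level $A^*<A(t_{\max})$ and let $t_1<t_{\max}<t_2$ be the two instants at which $A=A^*$. If $t_2\in I$, then $P(t_2)=q_{bn}(l)=\max_t P(t)\ge P(t_1)$; this settles every $A^*\ge A(t_{\mathrm{clear}})$, i.e. the entire flat ``roof'' of the loop. Otherwise $t_2>t_{\mathrm{clear}}$, so $A^*<A(t_{\mathrm{clear}})$, and here I would first establish $A(t_{\mathrm{clear}})<A(t_q)$: at $t_q$ every cross-section carries flow at least $q_{bn}(l)$ (the inflow has stayed at or above $q_{bn}(l)$ for the past $T_f$), whereas at $t_{\mathrm{clear}}$ every cross-section carries flow at most $q_{bn}(l)$, strictly less on the freshly filled upstream stretch because the inflow is by then decreasing. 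Hence $A^*<A(t_q)$, which forces $t_1<t_q$ as well, so both instants fall in free flow; since $q(0,\cdot)$ is non-decreasing on $[t_1-T_f,t_1]$ and non-increasing on $[t_2-T_f,t_2]$,
\[
P(t_1)=q(0,t_1-T_f)=\min_{[t_1-T_f,\,t_1]}q(0,\cdot)\;\le\;\frac{A(t_1)}{T_f}=\frac{A(t_2)}{T_f}\;\le\;\max_{[t_2-T_f,\,t_2]}q(0,\cdot)=q(0,t_2-T_f)=P(t_2).
\]
Taking the two cases together yields $P(t_1)\le P(t_2)$ in all cases; if no queue ever forms the second case alone already gives the conclusion, and if $q_b\neq q_e$ the trace simply fails to close while keeping its orientation.

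I expect the real obstacle to be the free-flow bookkeeping once one leaves the triangular-diagram idealization: with a strictly concave free-flow branch the travel time becomes density-dependent, the characteristics issuing from the loading and de-loading ramps are no longer parallel, and a rarefaction/compression-wave analysis is needed to recover the monotone ``window'' comparison used above; the inequality $A(t_{\mathrm{clear}})<A(t_q)$ and the unimodality of $A$ likewise need care in the continuous-bottleneck geometry, where during $I$ the density profile on $[x_0,l]$ is not a free-flow profile (and, for degenerate demand shapes with a long peak plateau, $A$ can acquire a secondary local maximum). The exit flow $P$ itself is, by contrast, insensitive to whether the bottleneck is pointwise or spatially extended, since only its saturated value $q_{bn}(l)$ and the free-flow relation $P(t)=q(0,t-T_f)$ enter — which is why this loop is counter-clockwise \emph{unconditionally}, whereas the figure eight of Proposition~\ref{proposition:MFD} required extra hypotheses.
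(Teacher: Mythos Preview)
Your proposal is correct and follows essentially the same two-case decomposition as the paper: during queuing $P(t)=q_{bn}(l)$ is maximal, while during non-queuing the exit flow is the (time-lagged) boundary inflow, which is lower on the loading branch and higher on the unloading branch at equal accumulation. You are in fact more careful than the paper on the non-queuing case --- the paper's two-sentence explanation neither proves the auxiliary inequality $A(t_{\mathrm{clear}})<A(t_q)$ nor makes the window-averaging comparison $P(t_1)\le A(t_1)/T_f=A(t_2)/T_f\le P(t_2)$ explicit, and it contains the same implicit constant-travel-time assumption you flag.
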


The explanation is simple, we distinguish between the following two cases:

\begin{enumerate}
    \item \textbf{During queuing:} When congestion starts, accumulation exceeds the bottleneck capacity. When the queue dissolves, the same exit flow (at bottleneck capacity) occurs at a lower vehicle count than when congestion began.
    \item \textbf{During non-queuing:} Since the flow reaching the downstream end left the upstream end at an earlier point in time, and the boundary flow is decreasing during the onset of congestion and increasing during the offset of congestion, the Network Exit Flow (NEF) is higher during the offset at equal vehicle counts.
\end{enumerate}

\section{Demand Sensitivity}
\label{subsec:sensitivity}

In the following, we examine the impact of a change in peak demand on the area of the hysteresis loop \(H(q_p)\). To maintain generalizability of the results, we focus on the asymptotic growth of \(H\) and consolidate lower-order terms according to Landau notation. In this section, we assume a corridor with a single, discontinuous bottleneck at the downstream end. We focus on physically relevant cases of the model, where a traffic queue has already formed behind the bottleneck by the time the upstream flow begins to decrease, i.e. $\psi(t_{qe})<l$ holds.
To ensure appropriate scaling of the corridor to accommodate the given demand, we assume that scaling the peak demand by a factor $a$ results in both the flow $q(k)$ and the link length $l$ being multiplied by the same factor $a$. Under this assumption, we establish upper and lower bounds for the area $H$ under the MFD hysteresis curve and identify the shapes of the fundamental diagram $q(k)$ that correspond to these bounds. To provide a more convenient expression for the maximum extent of hysteresis, we additionally define the symbol $\tau$ for the free-flow travel time.

\begin{proposition}
\label{proposition:sensitivity}
For a given instance of the model, the following hold:
\begin{enumerate}[label=(\alph*)]
    \item The functional form that maximizes the area under the clockwise part of the MFD hysteresis curve is a triangular fundamental diagram with arbitrarily high jam density.
    \item The maximum area under this curve is a quadratic function of peak demand $q_p$, 
    \begin{align}
H(q_p) = & \, a^2 \left( \frac{(t_{pb} - \tau)^3}{3} - 0.5 \tau^2 t_{pb} + 0.5 \tau^3 \right) \notag \\
& + q_p^2 \left( \frac{\tau}{2} + (t_{pe} - t_{pb} + \tau)^2 + t_e - t_{pe} \right) \notag \\
& + b^2 \left( \left(\frac{1}{6} - 0.25 \right)\tau^3 - \frac{\tau^2}{3} + \left(\frac{1}{3} + 0.25 \tau \right)(t_e - t_{pe})^2 \right) \notag \\
& + b q_p \left( -(t_e - t_{pe})^2 + 0.5 \tau t_{pe} + 0.5 \tau t_e + \frac{\tau^2}{2} \right)+\Theta(q_p). \label{eq:Hqp}
\end{align}
\end{enumerate}
\end{proposition}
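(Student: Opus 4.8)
The plan is to prove part (b) first, under the hypothesis that the extremal diagram is as described, and then prove part (a) by bounding the loop area of an arbitrary instance by the value so obtained.

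For (b): take $q(k)$ triangular with free-flow speed $v_f=l/\tau$ and jam density $\kappa$, and let $\kappa\to\infty$. Then $k_{bn}=\max\{q^{-1}(q_{bn})\}\to\infty$, so the standing queue behind the discontinuous bottleneck occupies a vanishing length, $\psi(t)\to l$, and the whole corridor sits on the linear free-flow branch, along which flow is transported unchanged at speed $v_f$. By the explicit solution of \ref{sec:general_solution} this makes $\bar q(t)=\frac1\tau\int_{t-\tau}^{t}q(0,s)\,ds$ a running average of the trapezoidal inflow, while the accumulation splits as $A(t)=\tau\,\bar q(t)+W(t)$, with $W(t)=N(0,t-\tau)-N(l,t)\ge 0$ the content of the point queue, which is strictly positive only on the queuing interval $[\sigma_1,\sigma_2]$. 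Since $\oint\tau\bar q\,d\bar q=0$ around the closed loop, the area collapses to
\[
H=\Bigl|\oint W\,d\bar q\Bigr|=\Bigl|\int_{\sigma_1}^{\sigma_2}W'(t)\,\bar q(t)\,dt\Bigr|,\qquad W'(t)=q(0,t-\tau)-q_{bn}\ \text{ on }[\sigma_1,\sigma_2];
\]
because the free-flow branch is linear the majorisation mechanism behind Proposition~\ref{proposition:MFD} produces no counter-clockwise sub-loop, so this $H$ is exactly the ``clockwise part''.

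To finish (b): the onset $\sigma_1$ is pinned by $q(0,\sigma_1-\tau)=q_{bn}$ on the rising ramp, and the clearing time $\sigma_2$ by the area balance $\int_{\sigma_1-\tau}^{\sigma_2-\tau}q(0,s)\,ds=q_{bn}(\sigma_2-\sigma_1)$, a quadratic in $\sigma_2$; one then locates which piece of the trapezoid contains $\sigma_2-\tau$, consistent with the regime assumed in the statement (a queue already present before the inflow starts to decrease). Substituting the piece-wise linear $q(0,\cdot)$ makes $\bar q$ piece-wise quadratic and the integrand piece-wise cubic; splitting $[\sigma_1,\sigma_2]$ at the break points generated by the corners $t_{pb},t_{pe}$ of the trapezoid and their $\tau$-shifts and integrating term by term gives a polynomial in $a,b,q_p$ and in $q_b,q_e,q_{bn}$. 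Under the stated scaling---flows and $l$ multiplied by a common factor while $\tau,t_{pb},t_{pe},t_e$ stay fixed---the $q_b,q_e,q_{bn}$-dependent part is $\Theta(q_p)$, and gathering it yields exactly \eqref{eq:Hqp}. This is a long but routine computation.

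For (a): with a general concave $q(k)$ the same split gives $A(t)=\int_0^{\psi(t)}k(x,t)\,dx+(l-\psi(t))k_{bn}$ and $l\,\bar q(t)=\int_0^{\psi(t)}q(x,t)\,dx+(l-\psi(t))q_{bn}$, and I would bound $H$ in two moves. (i) At fixed $q_{bn}$ and fixed free-flow branch, raising the jam density raises $k_{bn}$ and, by a coupling argument on the shock $\psi$, shortens the queue at every instant while leaving inflow and exit flow untouched; the stretch of corridor thereby handed back to the free-flow branch carried flow $>q_{bn}$ during the onset (growing queue, cf.\ Figure~\ref{figure:bn_schematic_mfd}) and $<q_{bn}$ during the offset, so the loading branch of the loop rises and the unloading branch falls---$H$ is non-decreasing in jam density and tends from below to the point-queue value computed above. (ii) Among point-queue instances only the free-flow branch matters: $\bar q(t)$ is a weighted average of $q(0,\cdot)$ whose kernel is flat precisely for the linear branch and otherwise peaked, and I would show by a rearrangement/Karamata argument (mirroring the majorisation step of Proposition~\ref{proposition:MFD}) that the loop in the $(A,\bar q)$ plane is widest for the flat kernel, i.e.\ for the triangular diagram. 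Chaining (i)--(ii) gives the assertion.

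The genuine difficulty is (a): perturbing either branch also shifts the characteristic speeds, hence the shock trajectory $\psi$ and the clearing time, so both monotonicity claims must be carried out with the explicit solution of \ref{sec:general_solution} in hand, and step (ii) in particular requires controlling how the averaging kernel deforms with the curvature of the free-flow branch and checking that the conclusion is insensitive to the sign of $a-b$. Once (a) is settled, (b) is essentially bookkeeping.
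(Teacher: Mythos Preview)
Your plan for (b) and the high-jam-density step of (a) tracks the paper closely: the paper also first argues that a large jam density confines the queue to a negligible strip, reducing to a point queue fed by the free-flow branch, and then evaluates $H=\int \bar q(t)\,A'(t)\,dt$ piecewise over the phases of the trapezoidal inflow, with $A'(t)=q(0,t)-q_{bn}$ during queuing. Your reformulation via $W(t)$ and the identity $\oint \bar q\,dA=-\oint W\,d\bar q$ is equivalent (since $A=\tau\bar q+W$ on the linear branch and $\oint\bar q\,d\bar q=0$) and arguably tidier; the piecewise bookkeeping is the same.

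The substantive difference is your step (ii). The paper does not attempt a global kernel-rearrangement argument. Instead it runs a local perturbation: take any flow level $q_0$ at which the free-flow branch has characteristic speed strictly below $v_f$, straighten the branch there (raise $\frac{dq}{dk}$ to $v_f$ at that level), and bound the resulting changes $\Delta Q^{\text{on}}$, $\Delta Q^{\text{off}}$ in the onset and offset contributions to the integral directly in terms of $a$, $b$, $q_p$ and the travel-time shift $\Delta\tau$ the flattening induces. Two short case distinctions ($a\ge b$ and $a<b$) give $\Delta Q^{\text{on}}\ge\Delta Q^{\text{off}}$, hence $\Delta H\ge 0$; iterating drives the free-flow branch to a line. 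This is precisely the step you flag as the genuine difficulty, and the paper's variational route bypasses it.

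Your kernel framing is attractive but, as stated, understates the coupling you would have to control. For a nonlinear free-flow branch the identity $A=\tau\bar q+W$ fails: one has $A=\int_0^l k\,dx+W$, with both the free-flow accumulation $\int_0^l k\,dx$ and the arrival process at $x=l$ (hence $W$) depending on the curvature of the branch. A majorisation bound on $\bar q$ alone therefore does not pin down the area swept in the $(A,\bar q)$ plane; you would need to couple the deformations of $\bar q$, $\int k\,dx$ and $W$ simultaneously. The paper's perturbation computes the change in the full integral $\int\bar q\,A'\,dt$ in one stroke and avoids having to disentangle these effects.
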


\begin{proof} See Appendix \ref{section:appendix_proofs}.
\end{proof}

The lower bound for the area under the hysteresis curve is zero. For instance, this occurs when the fundamental diagram increases linearly up to \( q = q_{bn} + \epsilon_1 \), and for \( q > q_{bn} + \epsilon_1 \), the slope becomes \( dq/dk = \epsilon_2 \), where \( \epsilon_1 \) and \( \epsilon_2 \) are arbitrarily small values. The slope of the linear segment must be selected appropriately to satisfy the previously defined consistency conditions.
 This configuration of the fundamental diagram results in states where \( q(x,t) > q_{bn} + \epsilon_1 \) occupy only a marginal spatial portion of the corridor and therefore have no influence on the spatial average of the flow. Thus, \( \bar{q}(t) \) is arbitrarily close to \( q_{bn} \) for the entire interval of active congestion behind the bottleneck.

The derivation of upper and lower bounds NEFs follows similar principles. Since $P(t) = q_\text{bn}$ always holds in the interval of active congestion, the two-dimensional area described by the NEF curve inadequately reflects the temporal asymmetry in the relationship between outflow and accumulation. Therefore, we use the difference between maximum and minimum accumulation at which the maximum outflow $q_\text{bn}$ is reached as a metric for hysteresis in the NEF diagram. The properties of the solution are summarized in the following proposition:

\begin{proposition}
    \label{proposition:NEF_max}
The fundamental diagram that maximizes hysteresis in the NEF is characterized by the following density-flow relationship \( k(q) \):
\[
k(q) = 
\begin{cases} 
0 & \text{for } 0 \leq q \leq q(0, t_{\text{min}}) \\
(t_{\text{pb}} - t_{\text{bn}})(q - q(0, t_{\text{min}})) & \text{for } q(0, t_{\text{min}}) \leq q
\end{cases}
\]
Under these conditions, the value of the hysteresis metric is given by:
\[
A(t_{\text{max}}) - A(t_0) = A(t_{\text{max}}) = N(0, t_{\text{max}}) - N(l, t_{\text{pb}}) - (t_{\text{max}} - t_{\text{pb}})q_{\text{bn}}
\]
\end{proposition}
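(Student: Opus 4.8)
The plan is to read the NEF hysteresis metric as the range of $A$ over the interval of active queuing (where $P\equiv q_{bn}$), rewrite it via vehicle conservation as a difference of cumulative counts, and then optimize over the fundamental diagram. Two facts organize everything: during active queuing the bottleneck discharges at $q_{bn}$, so $N(l,\cdot)$ is affine with slope $q_{bn}$ there; and $A(t)=N(0,t)-N(l,t)$. Combining these, and using that for the candidate diagram the queue reaches $x=l$ already at $t_{pb}$ — because its free-flow travel time is exactly $t_{pb}-t_{bn}$, so the $q_{bn}$-wave launched at $t_{bn}$ arrives at $x=l$ at $t_{pb}$ — I get $N(l,t_{\max})=N(l,t_{pb})+(t_{\max}-t_{pb})q_{bn}$ and hence the stated identity $A(t_{\max})=N(0,t_{\max})-N(l,t_{pb})-(t_{\max}-t_{pb})q_{bn}$. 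The same conservation identity shows the baseline term $A(t_0)$ vanishes for the candidate diagram, since at $t_0$ every prevailing flow is at most $q(0,t_{\min})$ and the first branch of $k(q)$ assigns such flows zero density; so the metric equals $A(t_{\max})$, and the substance of the proposition is that the candidate diagram is the maximizer.

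For the optimization I would split $A(t_{\max})-A(t_0)$ into its two pieces and argue that they are minimized, respectively maximized, by the two branches of $k(q)$. The baseline $A(t_0)$ — more generally the least accumulation reached while $P=q_{bn}$ — is just the free-flow content of the link at a moment when only flows at most $q(0,t_{\min})$ are present; it is bounded below by $0$ and is driven to $0$ precisely by assigning zero density to all flows up to $q(0,t_{\min})$, which is the first branch. For the peak, the identity turns ``maximize $A(t_{\max})$'' into ``minimize $N(l,t_{pb})$'', the number of vehicles that have cleared the link by $t_{pb}$. Tracking the cumulative curves shows $N(l,t_{pb})$ is non-increasing in the free-flow travel time $\tau$ (a longer $\tau$ keeps vehicles that entered before $t_{pb}$ still on the link), while the admissible regime caps $\tau$ at $t_{pb}-t_{bn}$ — beyond that the queue has not yet engaged the bottleneck by $t_{pb}$, putting the instance outside the class considered here. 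Hence the optimum saturates $\tau=t_{pb}-t_{bn}$, and to realize this travel time with a diagram whose density vanishes below $q(0,t_{\min})$ one is forced to the linear branch $k(q)=(t_{pb}-t_{bn})\big(q-q(0,t_{\min})\big)$ above that flow; substituting back gives the displayed value.

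The main obstacle is making the peak step rigorous rather than heuristic. One must (a) show that deforming any admissible diagram toward the candidate — zeroing low-flow density, steepening the free-flow branch up to the cap — never decreases $A(t_{\max})$, which is cleanest as a monotonicity statement on the cumulative curves $N(0,\cdot)$ and $N(l,\cdot)$ under such deformations rather than a single-parameter comparison; and (b) dispose of the degenerate branch in which the queue fills the entire link, so that $A$ plateaus at $l\,k_{bn}$ and $t_{\max}$ is no longer pinned to the offset crossing of $q_{bn}$ — here one argues that the high-jam-density diagrams needed to reach $x=0$ quickly come with a small $l\,k_{bn}$ and cannot beat the candidate. A secondary point is confirming that for the candidate diagram $t_0$ really attains the minimum accumulation among queuing times (so the metric is exactly $A(t_{\max})-A(t_0)$), which holds because zeroing the low-flow density empties the link whenever the boundary flow is at most $q(0,t_{\min})$, in particular at $t_0$.
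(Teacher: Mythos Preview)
Your proposal reaches the same optimal diagram and the same closed-form identity, and the low-flow half of your argument (zeroing $k$ on $[0,q(0,t_{\min})]$ forces $A(t_0)=0$) matches the paper. The organizing reduction, however, is different. The paper does \emph{not} split the metric into ``maximize $A(t_{\max})$'' and ``minimize $A(t_0)$'' separately. Instead it uses $N(l,t_{\max})-N(l,t_{\min})=q_{bn}(t_{\max}-t_{\min})$ to rewrite
\[
A(t_{\max})-A(t_{\min})=\bigl(N(0,t_{\max})-q_{bn}\,t_{\max}\bigr)-\bigl(N(0,t_{\min})-q_{bn}\,t_{\min}\bigr),
\]
observes that $t_{\max}$ (the time on the falling branch where $q(0,\cdot)=q_{bn}$) is independent of the fundamental diagram, and hence reduces the whole problem to the single objective $\max\{q_{bn}\,t_{\min}-N(0,t_{\min})\}$, i.e.\ push the queue-dissipation time $t_{\min}$ as late as possible. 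That single-variable objective is then analyzed via a path-integration identity linking $t_{\min}$ to $t_{bn,\text{start}}$, $A(t_{\min})$ and $N(l,t_{bn,\text{start}})$, from which both branches of the optimal $k(q)$ fall out together. The benefit is that no consistency check between two separate optimizations is needed; your split requires exactly that check, which you correctly flag as the ``main obstacle'' but do not resolve.

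One concrete gap in your argument: the cap $\tau\le t_{pb}-t_{bn}$ is not an admissibility constraint of the model. The ``physically relevant'' restriction in the paper is that the queue has formed by the time the upstream flow \emph{begins to decrease}, i.e.\ by $t_{pe}$, not by $t_{pb}$; so diagrams with $\tau\in(t_{pb}-t_{bn},\,t_{pe}-t_{bn}]$ are still in the class, and your sentence ``putting the instance outside the class considered here'' does not justify the cap. In the paper the value $t_{pb}-t_{bn}$ emerges instead from the interplay between maximizing $t_{bn,\text{start}}$ and the convexity bound $k(q_{bn})\le k(q(0,t_{\min}))+(t_{pb}-t_{bn})\,(q_{bn}-q(0,t_{\min}))$, together with the trade-off that raising $k$ on low flows lowers $N(l,t_{bn,\text{start}})$ but raises $A(t_{\min})$ by at least as much. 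Your monotonicity-in-$\tau$ heuristic for $N(l,t_{pb})$ is in the right spirit, but it treats $\tau$ as a single scalar, whereas for a general concave $Q$ each flow level has its own characteristic speed; the paper's argument keeps track of $\frac{dk}{dq}$ flow-by-flow, which is what pins down the linear second branch rather than merely a travel-time bound.
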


The proof for this statement is also provided in appendix \ref{section:appendix_proofs}.

On the other hand, the minimum value of the NEF hysteresis metric is 0, which is reached, for example, when \( \frac{dk}{dq} = 0 \) for all \(q\) in uncongested conditions, as the system then behaves like a conventional queue.

\section{Empirical Validation}
\label{sec:empirical}

\subsection{Bottleneck Geometry and MFD Hysteresis}
In this section, we empirically validate the hypothesis that a bottleneck with gradually decreasing geometry generates less hysteresis under otherwise identical conditions.
In terms of traffic flow theory, this statement is straightforward: First, we note that the number of vehicles at any given time is independent of the bottleneck geometry, as capacity is always lowest at the most downstream position. Consider two points in time \(t_1\) and \(t_2\) (with \(t_2 > t_1\)) with identical vehicle counts—first in a corridor with a single bottleneck, and then analyzing how traffic behavior changes by adding a second bottleneck at position \(x_1\).

When adding the additional bottleneck, vehicle accumulation between $x=0$ and $x=x_1$ increases more rapidly at $t_1$ than at $t_2$. This occurs because the preliminary bottleneck's higher capacity leads to both a faster reaching of peak vehicle density in this road segment and a higher discharge rate at subsequent times compared to that of the total accumulation. From this, we can conclude that the spatial proportion of the section with uncongested conditions decreases more strongly in the first part. Consequently, the spatial average of the flow at time \(t_1\) decreases more strongly than at \(t_2\).

Through iterative application, this argument formulated for two bottlenecks can be extended to roads with arbitrary geometric configurations.

To illustrate the theoretical explanation, we conduct a simulation using the Cell Transmission Model (CTM). The model instance comprises an initially empty 15 km road segment over a four-hour time interval. The inflow develops as follows: Starting from zero, it increases linearly to 3000 vehicles per hour at t=30 minutes, then decreases linearly back to zero at t=4 hours.
Traffic flow dynamics are described by a triangular fundamental diagram with the following parameters: free-flow speed 120 km/h, jam density 250 vehicles/km, and capacity 6000 vehicles per hour. We examine two scenarios: The first case features a single bottleneck at the downstream end with a capacity of 1800 vehicles per hour. In the second case, an additional bottleneck with a capacity of 2100 vehicles per hour is placed at the midpoint of the road.
The traffic dynamics of both scenarios are visualized in Figure \ref{figure:demo} as a heat map. Red vertical lines mark two representative time points (t=75 and t=153) with identical vehicle accumulation. It clearly shows that the blue region, indicating uncongested traffic conditions, diminishes more strongly at the earlier time point due to the additional bottleneck. The area under the hysteresis curves depicted in figure \ref{figure:demo_mfd} measures 25123 in the first case and reduces to 20374 in the second case.

\begin{figure}[H]
    \centering
    \begin{subfigure}[b]{0.45\textwidth}
        \centering
        \includegraphics[width=\textwidth]{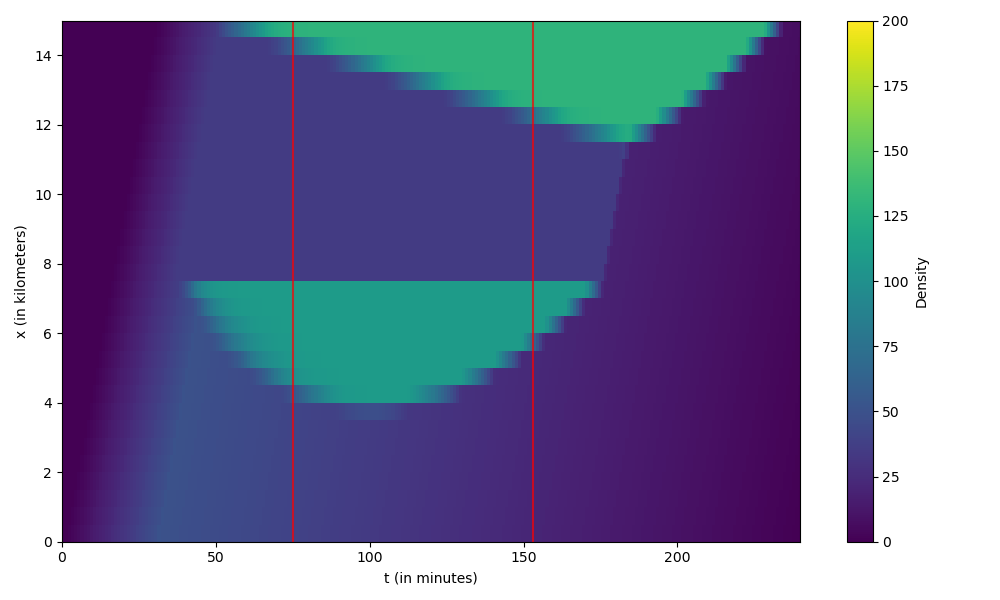}
        \caption{Single Bottleneck}
    \end{subfigure}
    \hfill
    \begin{subfigure}[b]{0.45\textwidth}
        \centering
        \includegraphics[width=\textwidth]{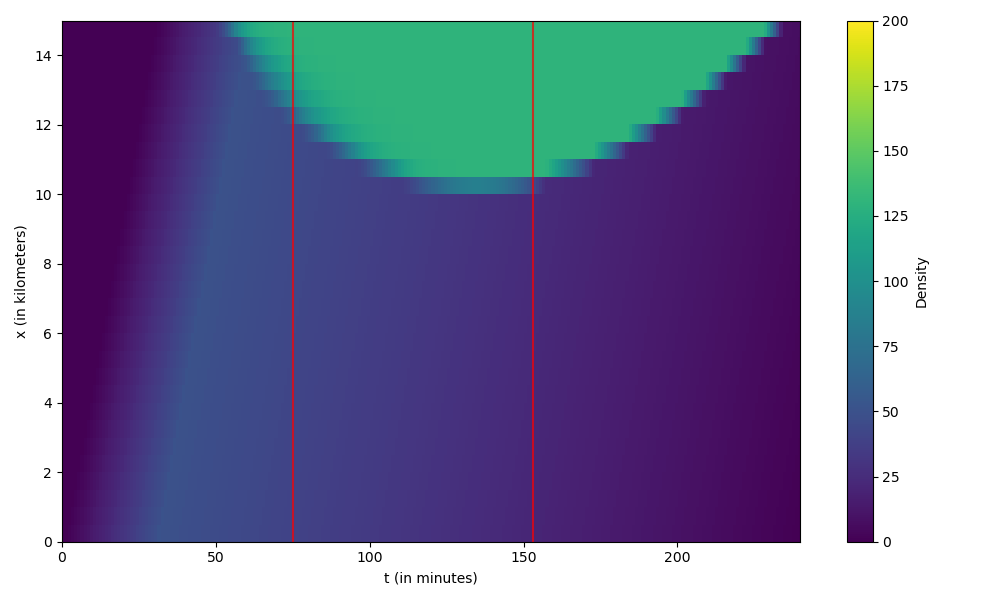}
        \caption{Two Bottlenecks}
    \end{subfigure}
    \caption{Heatmaps of traffic densities in both simulated scenarios}
    \label{figure:demo}
\end{figure}

\begin{figure}[H]
    \centering
    \begin{subfigure}[b]{0.45\textwidth}
        \centering
        \includegraphics[width=\textwidth]{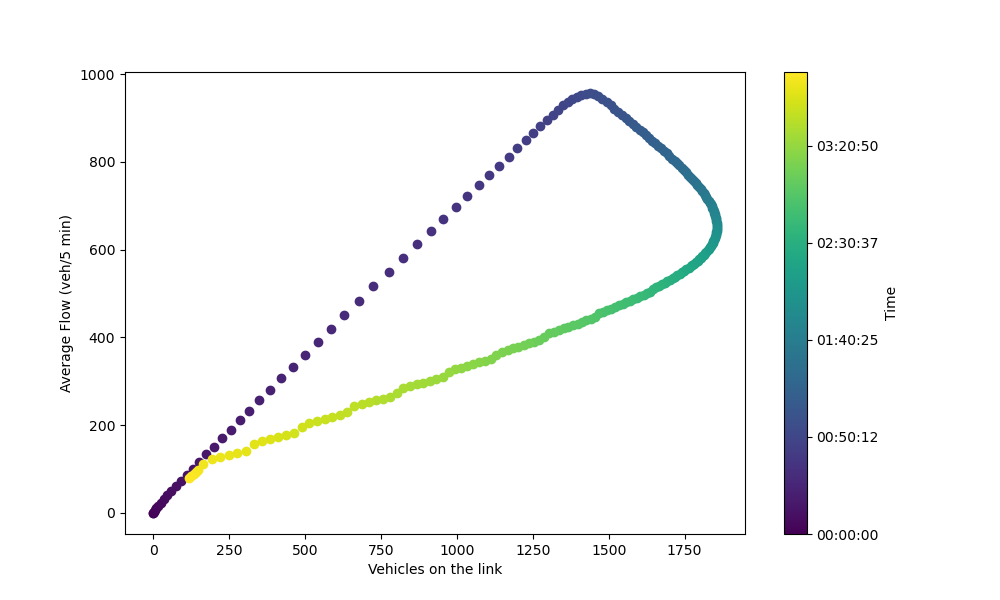}
        \caption{Single Bottleneck}
    \end{subfigure}
    \hfill
    \begin{subfigure}[b]{0.45\textwidth}
        \centering
        \includegraphics[width=\textwidth]{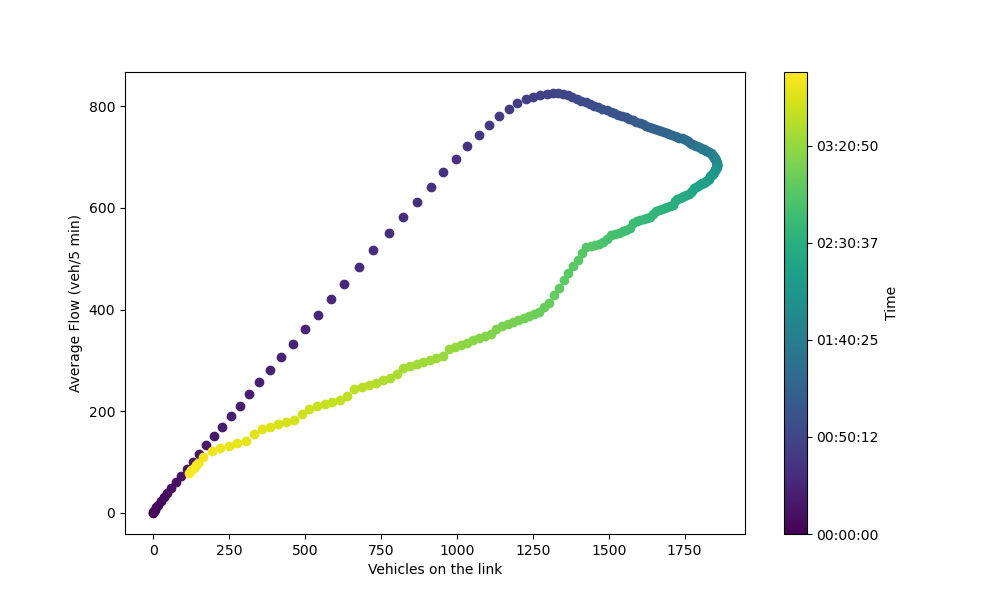}
        \caption{Two Bottlenecks}
    \end{subfigure}
    \caption{MFDs with Clockwise Hysteresis Loops}
    \label{figure:demo_mfd}
\end{figure}

\subsection{The Study Sites}
The empirical analysis is based on measurements from two representative road segments. The data was obtained from the Performance Measurement System (PeMS) of the California Department of Transportation (Caltrans). Data collection occurred on 29 workdays between February 5, 2024, and March 17, 2024, with traffic flow and occupancy rates aggregated in 5-minute intervals. Following the convention of traffic flow research (see e.g. \cite{gersun11}, \cite{builad09}), average flow and occupancy were calculated as the unweighted average of individual detector measurements.
The first study site comprises a 7.4 km corridor section of Interstate 880 northbound in the San Francisco metropolitan area, immediately upstream of the bottleneck at the Washington Avenue off-ramp. Data collection was conducted using 18 detectors along the corridor. The traffic node geometry and congestion formation mechanisms are documented in detail in studies [1], [2], and [3]. According to [2], the oversaturated exit leads to congestion on all I-880 lanes extending approximately one kilometer upstream of the exit point. Downstream of this point, the road capacity increases gradually, allowing the bottleneck to be classified as discontinuous. 

The MFD for this road section over the course of a day is presented in Figure \ref{figure:wash_full}. Individual data points were calculated by taking the arithmetic mean of the corresponding times across all measurement days. Two clockwise hysteresis loops are clearly visible, with a slightly higher level of congestion during the evening peak hour. An additional partial counter-clockwise loop occurs between increasing traffic in the early morning hours (starting around 4 AM) and decreasing traffic in the late evening hours (until approximately 11 PM). As explained in the previous sections, this phenomenon can be attributed to the fact that the rate of flow increase at the upstream corridor end during early morning hours significantly exceeds the absolute value of the evening traffic decrease rate. Linear interpolation of the time-dependent traffic flow at this location yields a slope of 1.65 for the period from 4 to 8 AM, and a slope of -0.77 for the period from 7 to 11 PM (cf. Figure \ref{figure:upstr_boundary}).The dynamical behavior shown in Figure \ref{figure:wash_full} differs from our analysis in Section \ref{sec:qualitative} in that it captures two peak traffic periods - morning and evening - which manifests in two clockwise hysteresis loops. When examining the morning peak period in isolation, such as the interval from 6 to 10 AM shown here, a single hysteresis loop emerges, as illustrated in Figure \ref{figure:washav_morning}.

The second study site extends over 6 km upstream and immediately downstream of the Ashlan Avenue on-ramp to Highway 41 (northbound) near Fresno, California. At this location, 11 detectors were used for data collection. The congestion dynamics at this location are described in source \cite{freeway41}. The description in the article and the geometric configuration of this section indicate that during peak traffic hours, vehicles merging from the shoulder lane progressively congest one lane after another. This continuous lateral propagation of the congested area allows for the classification of this bottleneck as a continuous bottleneck. For the second study site, Figure \ref{figure:washav} presents the MFD for both the entire day and the morning peak period from 6 to 10 AM, with data points calculated as arithmetic means across all measurement days. In this case, the full-day diagram exhibits two overlapping clockwise loops, without any counter-clockwise components. Again, the morning peak period shows a single clockwise hysteresis loop.

\subsection{Comparison of Locations}
We generate MFD curves for both locations for the period between 6 and 10 AM on all 29 workdays between February 5, 2024, and March 15, 2024. The hysteresis curves were smoothed using a Savitzky-Golay filter \cite{Savitzky1964}, applying this method separately to the time series of flow and occupancy to preserve temporal correlation while reducing measurement noise. Figure \ref{figure:boundary_flows} visualizes the daily progression of traffic flow at the most upstream detectors of both study sites, with data smoothed using a moving average. The vehicle inflow at the I-880N site significantly exceeds that of SR-41N. Conversely, the occupancy at the Ashlan Avenue on-ramp exhibits higher maximum occupancy, indicating a lower minimum bottleneck capacity at this location. Both bottleneck capacity and boundary flow influence the extent of hysteresis in addition to the geometric form of the bottleneck, whose effect we aim to examine in isolation. To compensate for these different bottleneck capacities and boundary flows, the area under the smoothed curve was normalized by dividing by the maximum occupancy of the respective day.

Assuming a triangular fundamental diagram and sroughly equal times of congestion formation and dissipation, this can be justified theoretically as follows: If the capacity of the bottleneck at the first site ($q_{\text{bn,1}}$) is $p$ percent higher with identical congestion onset, then due to the triangular shape of the fundamental diagram, the accumulation or occupancy at this time is also $p$ percent higher than at the second site.

The total number of queued vehicles results from the piecewise linear boundary condition as:
\[
N_{\text{bn}}(q_{\text{bn}}) = 0.5 \cdot (t_{2,q_{\text{bn}}} - t_{1,q_{\text{bn}}}) \cdot (q_p - q_{\text{bn}}),
\]
where $t_{2,q_{\text{bn}}}$ and $t_{1,q_{\text{bn}}}$ represent the times with boundary flow $q_{\text{bn}}$ in descending order. From the assumptions made, it follows that $t_{1,q_{\text{bn}}}$ is independent of the peak flow value. A $p$ percent increase in $q_{\text{bn}}$ necessitates a corresponding increase in $q_p$, resulting in:
\[
N_{\text{bn}}(p \cdot q_{\text{bn}}) = p \cdot N_{\text{bn}}.
\]

At any time $t_0$ during congestion formation, the flow in the area in which both corridors are congested is $p$ percent higher at study site 1, corresponding to the difference in bottleneck capacities. This also applies to the uncongested area. Additionally, due to lower congestion density with higher vehicle count $N_{\text{bn}}$, there exists an area that is congested only at the first site at time $t_0$. Here, the flow at the first site equals $q_{\text{bn,1}}$, while at the second site it exceeds $q_{\text{bn,2}} = q_{\text{bn,1}} / p$ during congestion formation and falls below $q_{\text{bn,2}}$ during congestion dissolution. This situation is illustrated graphically in Figure \ref{figure:2waves}.

Consequently, during congestion formation, the average flow at the second site is higher than $1/p$ times the average flow at the first site, while it is lower at equal occupancy during congestion dissolution. Thus, the vertical component of the hysteresis area is larger at the second site, while the horizontal component is identical after normalization, as both start and end times and accumulations with hysteresis are scaled by factor $p$. The different corridor lengths do not affect this result, as we measure percentage occupancy rather than total accumulation on the $x$-axis.

This theoretical traffic flow analysis suggests that, assuming negligible effects from bottleneck geometry, the normalized hysteresis area at the second location (\(H_2\)) should be at least as large as that at the first location (\(H_1\)) due to a lower boundary flow. However, empirical data reveal mean hysteresis values of 35.14 and 24.65 for the two study sites, respectively—a reduction of 29.87\%. A two-sample \(t\)-test yields a \(p\)-value of 0.0111, providing sufficient evidence to reject the null hypothesis (\(H_2 \geq H_1\)) at all common significance levels. Additional statistical parameters of the datasets are summarized in Table \ref{table:stat_params}. These findings suggest that continuous bottleneck geometries also diminish MFD hysteresis in empirical scenarios. To determine the unit of area, the vehicle count can be considered dimensionless as it represents a counted value. Since occupancy is also dimensionless, at a 5-minute aggregation interval, the area under the curve has the unit \SI{0.2}{\per\minute} for both the raw and normalized area, as is consistent with a cumulative difference in average flows.

\begin{table}[H]
    \centering
    \begin{tabular}{|c|cc|cc|}
        \hline
        \textbf{Study Site} & \multicolumn{2}{c|}{\textbf{Raw Area}} & \multicolumn{2}{c|}{\textbf{Normalized Area}} \\ \hline
        & \textbf{Mean} & \textbf{Std. Dev.} & \textbf{Mean} & \textbf{Std. Dev.} \\ \hline
        I-880 N & 534.78 & 326.37 & 35.14 & 20.25 \\ \hline
        SR-41 N & 448.62 & 224.57 & 24.65 & 13.74 \\ \hline
    \end{tabular}
    \caption{Statistical parameters of the distribution of areas under the daily hysteresis curves}
    \label{table:stat_params}
\end{table}

\begin{figure}[H]
    \centering
    \includegraphics[width=\textwidth]{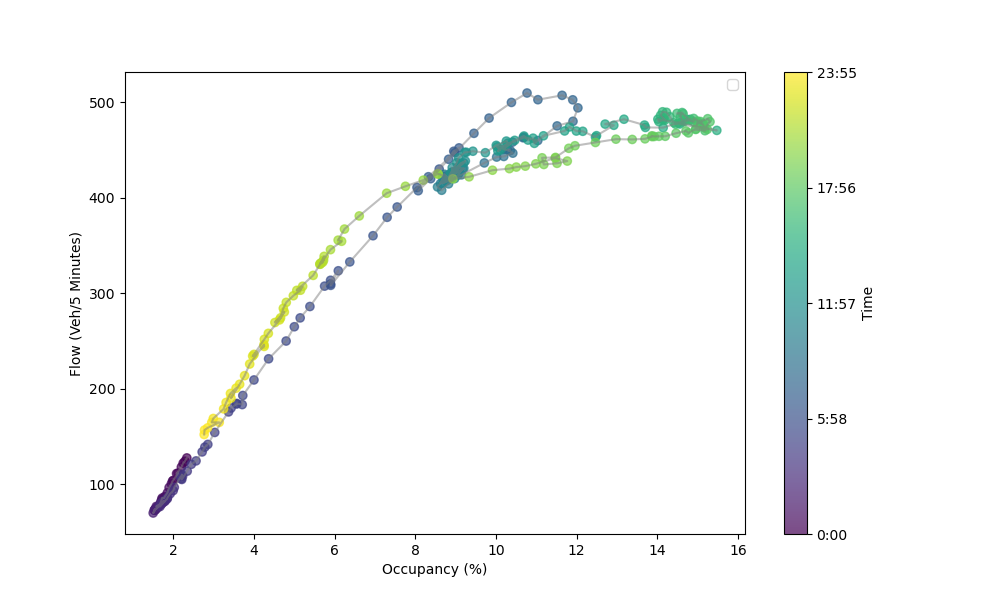}
    \caption{Flow-density relationship over an entire day, I-880N south of the Washington Avenue off-ramp}
    \label{figure:wash_full}
\end{figure}

\begin{figure}[H]
    \centering
    \includegraphics[width=\textwidth]{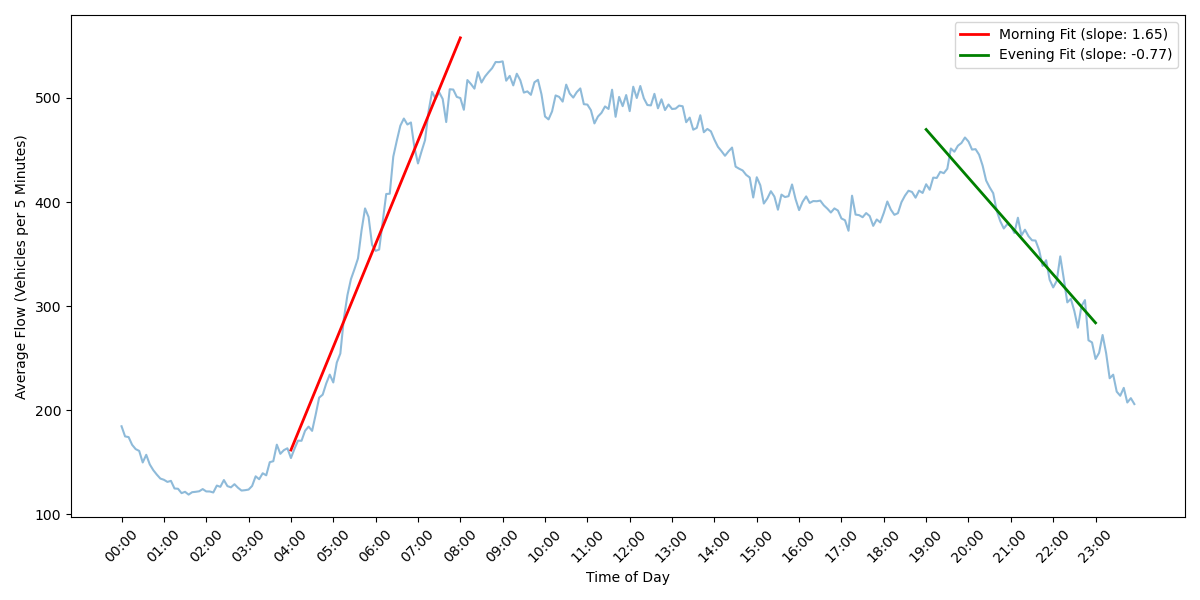}
    \caption{Flow-time diagram at the upstream end of the corridor with regression lines for the intervals 4 a.m. to 8 a.m. and 7 p.m. to 11 p.m.}
    \label{figure:upstr_boundary}
\end{figure}

\begin{figure}[H]
    \centering
    \includegraphics[width=\textwidth]{w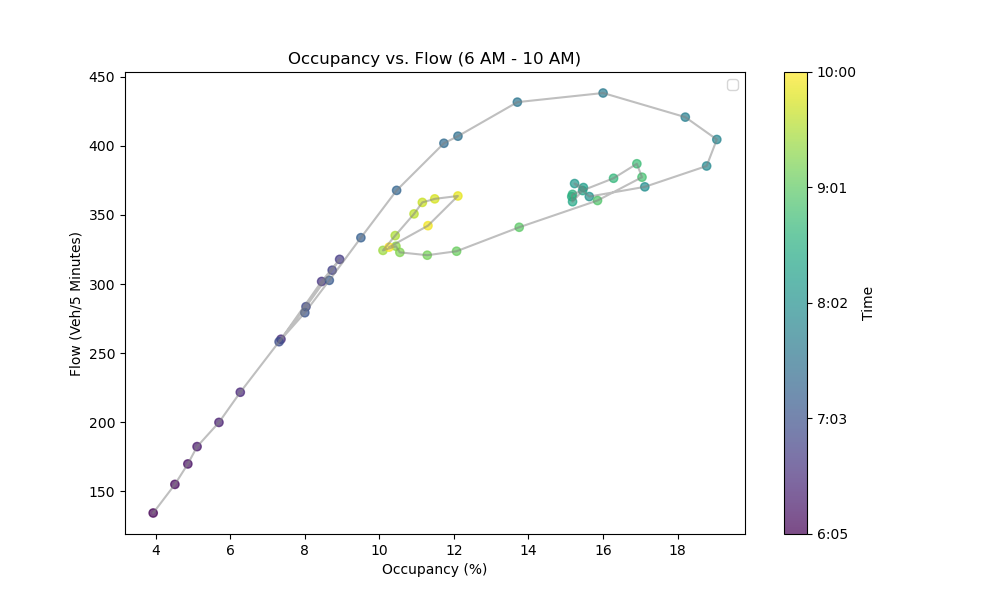}
    \caption{Flow-density relationship over at I-880N, 6 a.m. to 10 a.m.}
    \label{figure:upstr_boundary}
\end{figure}

\begin{figure}[H]
    \centering
    \begin{subfigure}[b]{0.49\textwidth}
        \centering
        \includegraphics[width=\textwidth]{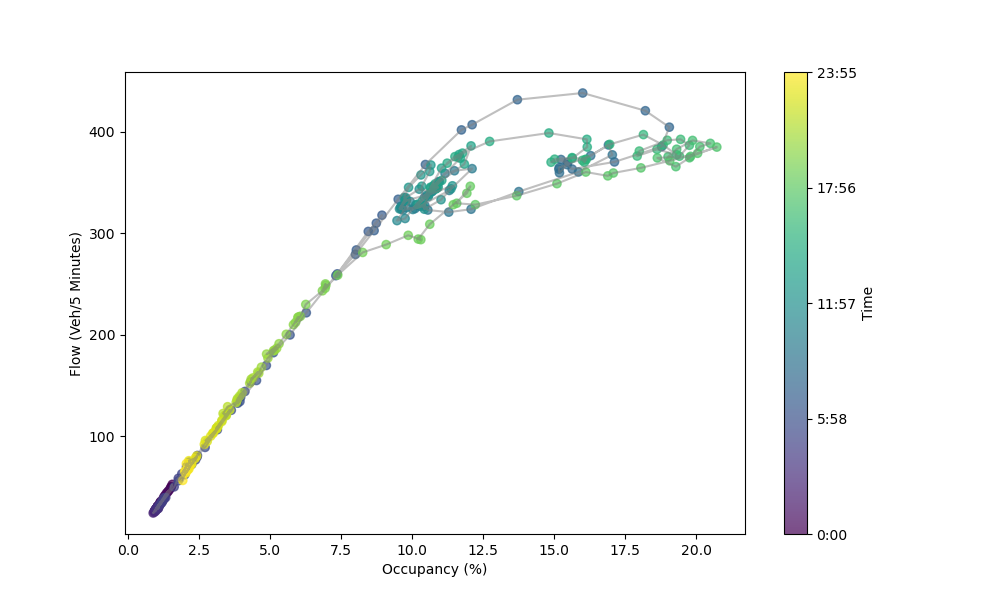}
        \caption{Full-day}
        \label{fig:ashav_full}
    \end{subfigure}
    \hfill
    \begin{subfigure}[b]{0.49\textwidth}
        \centering
        \includegraphics[width=\textwidth]{ashav_morning.png}
        \caption{Morning peak (6 a.m. to 10 p.m.)}
        \label{fig:ashav_morning}
    \end{subfigure}
    
    \caption{MFDs of SR-41 in northbound direction near Ashlan Avenue on-ramp.}
    \label{figure:washav}
\end{figure}

\begin{figure}[H]
    \centering
    \includegraphics[width=\textwidth]{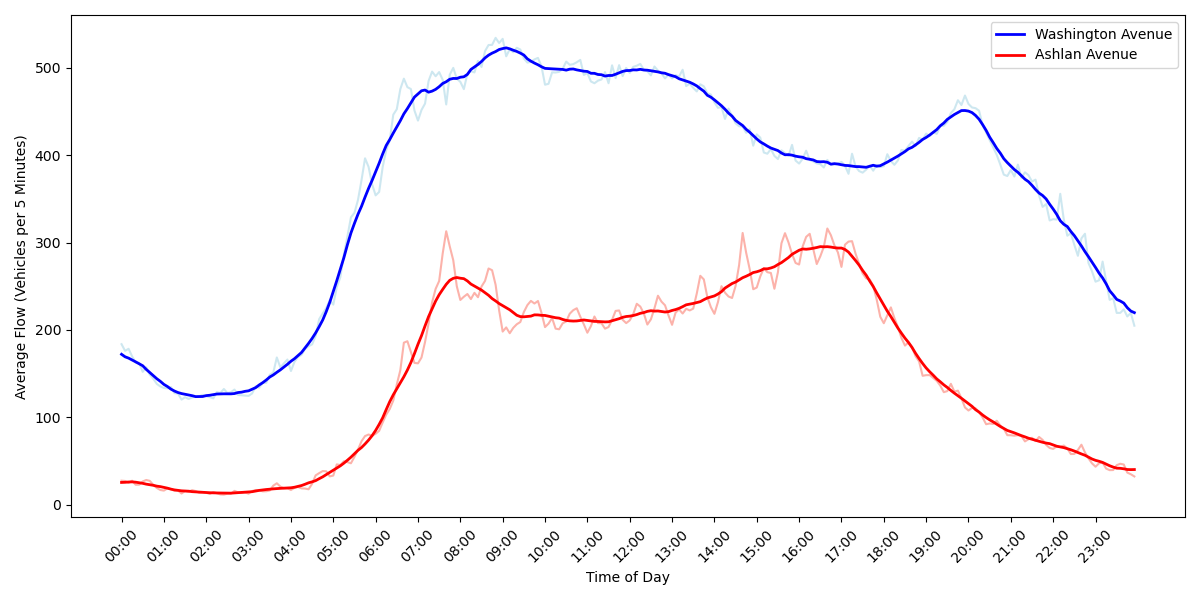}
    \caption{Local flow measured at the most upstream detectors at both study sites.}
    \label{figure:boundary_flows}
\end{figure}

\begin{figure}[H]
    \centering
    \includegraphics[width=\linewidth]{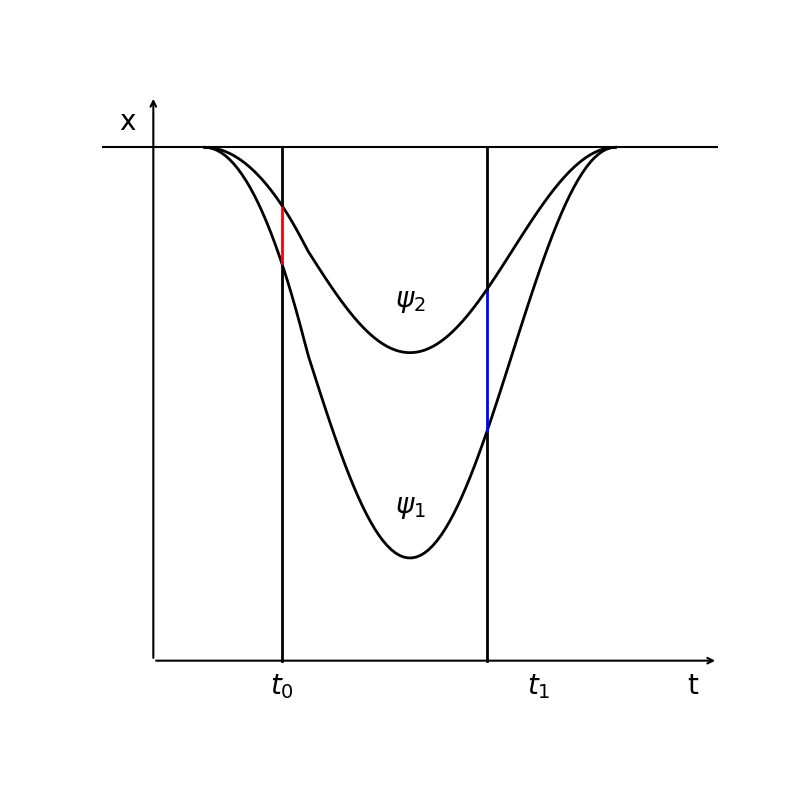}
    \caption{The curves $\psi_1$ and $\psi_2$ represent the trajectories of the tails of the queues at study locations 1 and 2, respectively. Consider two arbitrary time points $t_0$ and $t_1$ with equivalent accumulation values. Let $q_1(x,t)$ denote the flow rate at the first study location and $q_2(x,t)$ at the second location. In regions where the vertical lines are colored in black, it holds that $q_1(x,t) = (1+p)q_2(x,t)$. The red-colored portion indicates where $q_1(x,t) \leq (1+p)q_2(x,t)$, while the blue-colored portion denotes where $q_1(x,t) \geq (1+p)q_2(x,t)$.
}
    \label{figure:2waves}
\end{figure}

\section{Numerical Experiments}
\label{sec:simulation2}

\subsection{Bottleneck Geometry and MFD Hysteresis}

In this section, we conduct traffic simulations using the Cell Transmission Model (CTM) \cite{Daganzo1994} to analyze the effects of various configurations on hysteresis dynamics. Our parameter selection is based on data from I-880 North, the more congested of the two studied road sections. Since this heavily trafficked segment maintains significant congestion levels during midday hours, we focused our parameter estimation on the evening rush hour period (6:00 PM to 10:00 PM). To estimate the upstream boundary condition, we approximate the time-flow measurements from the uppermost detector presented in the previous chapter using the following piecewise linear function \( f(t) \):
\[
f(t) =
\begin{cases} 
at + b & \text{for } t \leq t_p, \\
ct + d & \text{for } t > t_p,
\end{cases}
\]
where t represents hours after 6 PM (e.g., t = 2.5 for 8:30 PM) subject to the constraints:
\[
\begin{aligned}
a &\geq 0 \quad \text{(non-negative slope in first segment)}, \\
b &\geq 0 \quad \text{(non-negative intercept)}, \\
c &\leq 0 \quad \text{(negative slope in second segment)}, \\
at_p + b &= ct_p + d \quad \text{(continuity at breakpoint \( t_p \))}.
\end{aligned}
\]

The parameters \((a, b, c, d, t_p)\) were estimated simultaneously using Sequential Least Squares Programming (SLSQP). The estimation yields:
\[
f(t) =
\begin{cases} 
447.23t + 5571.84 & \text{for } t \leq \text{8:00 PM}, \\
-620.37t + 7708.81 & \text{for } t \geq \text{8:00 PM}.
\end{cases}
\]

The bottleneck capacity is determined using measurements from two representative detectors centrally positioned within the corridor. Average flow and occupancy are calculated for each 5-minute interval during the morning peak period, and the resulting hysteresis curve is smoothed using a Savitzky-Golay filter. The resulting characteristics are shown in Figure~\ref{fig:estimate_bn_flow} and clearly demonstrate that both detectors are significantly affected by the queue. The simulated bottleneck capacity is selected as the flow at the rightmost point of the smoothed curve, corresponding to a rounded value of \(q_{\text{bn}} = 6240\).

\begin{figure}[H]
    \centering
    \includegraphics[width=0.8\textwidth]{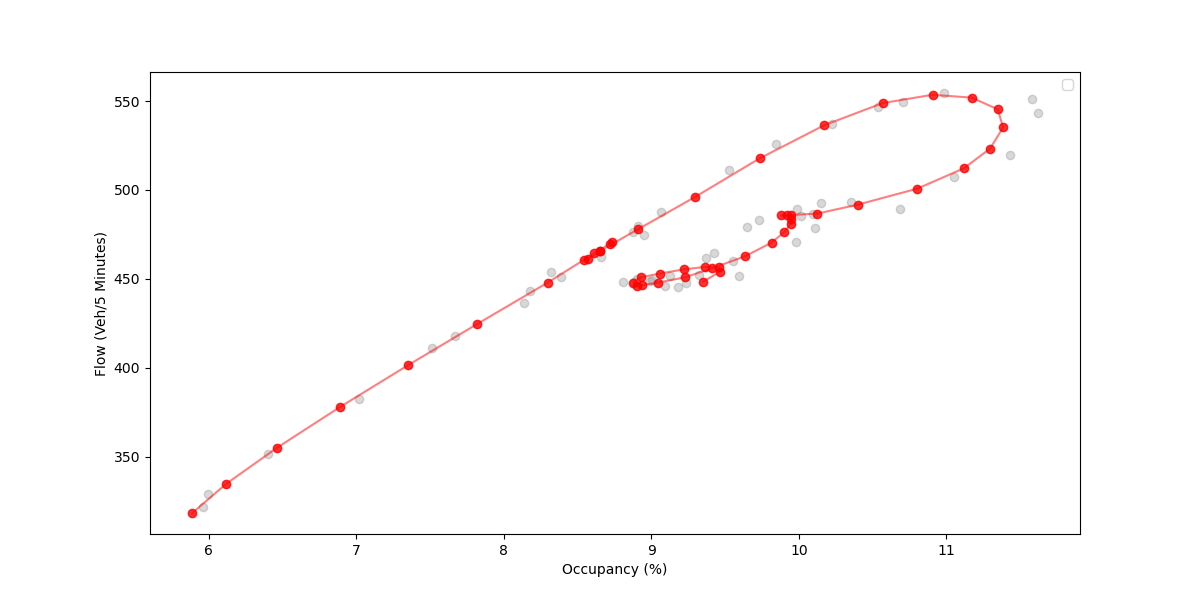} 
    \caption{Smoothed flow-density relationship, averaged from two centrally located detectors.}
    \label{fig:estimate_bn_flow}
\end{figure}

A triangular shape is assumed for the fundamental diagram. The free-flow speed corresponds to the speed limit of this road section at \SI{70}{\mile\per\hour} (approximately \SI{112}{\kilo\meter\per\hour}). The jam density is calculated from the average number of lanes, \(4.7\), multiplied by the inverse of the average vehicle length (we assume a typical value of \SI{6.1}{\meter}), resulting in a rounded value of \SI{760}{\per\kilo\meter}.

The critical density of the road section cannot be derived directly from the collected data. Instead, we calculate it using the ratio between the two slopes of the fundamental diagram branches. We assume this ratio to be (-4), which is a common choice in traffic flow literature. This yields a critical density of \SI{152}{\per\kilo\meter} and a maximum flow of \(112 \cdot 152 = 17024\) vehicles per hour. To accurately capture the resulting traffic behavior and to facilitate the numerical interpretability of the results, we set the corridor length such that it can be traversed in free-flow conditions within \SI{5}{\minute}, which corresponds to approximately \SI{9.34}{\kilo\meter}. The initial vehicle distribution is chosen such that the number of vehicles in each cell corresponds to the boundary flow of the first simulated time step.

As in the previous chapter, we measure the hysteresis effect using the area within the MFD curve. We compare different scenarios: First, a base scenario with a single bottleneck at the lower end, modeled after the conditions at the Washington Avenue Off Ramp. Building on this, we investigate how a more uniform distribution of bottlenecks affects hysteresis by simulating variants with 2, 4, or 8 bottlenecks. All other parameters remain unchanged.

The bottlenecks are placed at equal intervals. The distance between bottlenecks equals the distance between the first bottleneck and the upper end of the section. Mathematically, the \(i\)-th bottleneck in a configuration of \(n\) bottlenecks is positioned at \(9.34 \times \frac{i}{n}\).

The capacities of the bottlenecks decrease linearly in the downstream direction. The reference values are the maximum inflow of 6460 vehicles per hour at the upper end and the capacity of 6240 vehicles per hour at the lowest bottleneck. Thus, the capacity of the \(i\)-th bottleneck (in a configuration of \(n\) bottlenecks) is calculated as:
\[
6460 - \left(\frac{i}{n} \times 220\right).
\]
To examine how sensitive the hysteresis metric is to an increased number of entering vehicles, we simulated additional scenarios. Beyond the standard empirically derived inflow values, we investigated two variants: a moderate increase of 3 \% (high demand) and a significant increase of 6\% (very high demand). The inflow was adjusted accordingly at all time points.
This variation resulted in a total of 24 distinct simulation scenarios. The simulation intervals were adjusted based on demand levels: 4 hours for normal, 5 hours for high, and 6.5 hours for very high demand, ensuring complete congestion dissolution in all cases. The results of this investigation are summarized in Table \ref{table:hysteresis1}. 

\begin{table}[h!]
    \centering
    \caption{Impact of Demand Increases on Hysteresis Area}
    \label{table:hysteresis1}
    \begin{tabularx}{\textwidth}{lXXXX}
        \toprule
        \textbf{Demand} & \textbf{1 Bottleneck} & \textbf{2 Bottlenecks} & \textbf{4 Bottlenecks} & \textbf{8 Bottlenecks} \\
        \midrule
        Normal (1.0)  & 1475.58 & 1421.82 & 1324.07 & 1241.22 \\
        High (1.03)   & 9403.07 & 7906.33 & 6908.70 & 6238.75 \\
        Very High (1.06) & 27917.48 & 21489.42 & 17872.48 & 15445.95 \\
        \bottomrule
    \end{tabularx}
\end{table}

Figure \ref{figure:mfd_dem_sens} illustrates representative MFDs for normal and very high demand, and road geometries with single and eight bottlenecks.

\begin{figure}[H]
    \centering
    \begin{subfigure}[b]{0.45\textwidth}
        \centering
        \includegraphics[width=\textwidth]{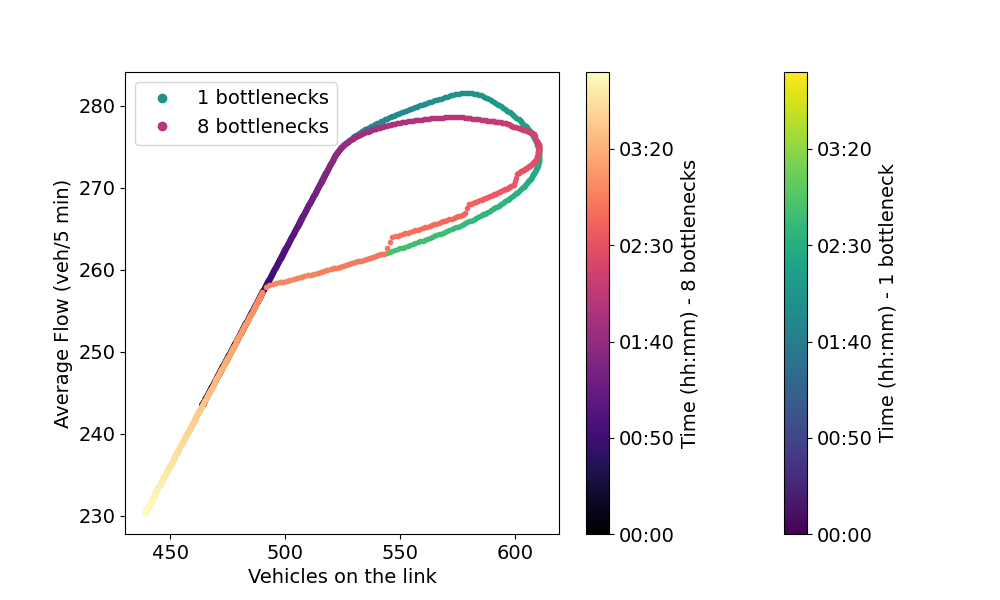}
        \caption{Normal Demand}
        \label{fig:normal_demand}
    \end{subfigure}
    \hfill
    \begin{subfigure}[b]{0.45\textwidth}
        \centering
        \includegraphics[width=\textwidth]{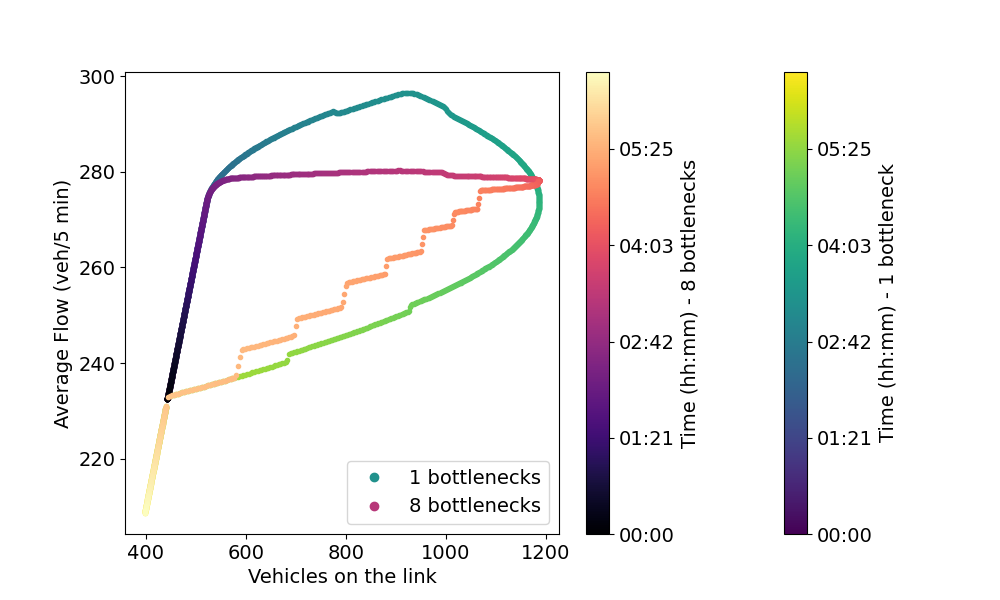}
        \caption{Very High Demand}
    \end{subfigure}
    \caption{MFDs for two demand and bottleneck scenarios.}
    \label{figure:mfd_dem_sens}
\end{figure}

Furthermore, we examine the effects of changing bottleneck capacity by $\pm 3\%$ on the metric under investigation. In the triangular fundamental diagram we studied, hysteresis occurs only during periods of active congestion. This corresponds to the occurrence of capacity drop, which plays a significant role in modeling bottleneck phenomena (cf.~\cite{Hall1991, Chung2007}). The empirically estimated bottleneck capacity of 6240 vehicles corresponds to the value after congestion onset, suggesting that this value already includes a capacity drop $\Delta q_{\text{BN}}$. A capacity drop of 8 \% relative to the initial pre-queuing capacity represents a rough average of values derived from empirical measurements. Using this value as a reference for the initial capacity drop, a subsequent capacity increase of 3\% corresponds to a 34.5\% reduction in the capacity drop, while a capacity reduction of 3\% results in its 34.5\% increase.
We simulated 16 scenarios, again with one to eight bottlenecks. The results are presented in Table~\ref{table:hysteresis2}. MFDs for both cases normal and two representative bottleneck numbers are shown in figure \ref{figure:mfd_cap}.

\begin{table}[h!]
    \centering
    \caption{Impact of changes of bottleneck capacity on hysteresis area}
    \label{table:hysteresis2}
    \begin{tabularx}{\textwidth}{lXXXX}
        \toprule
        \textbf{Bottleneck Capcaity} & \textbf{1 Bottleneck} & \textbf{2 Bottlenecks} & \textbf{4 Bottlenecks} & \textbf{8 Bottlenecks} \\
        \midrule
        6240 veh/hr  & 1475.58 & 1421.82 & 1324.07 & 1241.22 \\
        6427 veh/hr (+3\%)   & 11.74   & 11.74   & 11.74   & 11.74   \\
        6053 veh/hr (-3\%) & 9231.72 & 7791.13 & 6823.08 & 6170.63 \\
        \bottomrule
    \end{tabularx}
\end{table}

\begin{figure}[H]
    \centering
    \begin{subfigure}[b]{0.45\textwidth}
        \centering
        \includegraphics[width=\textwidth]{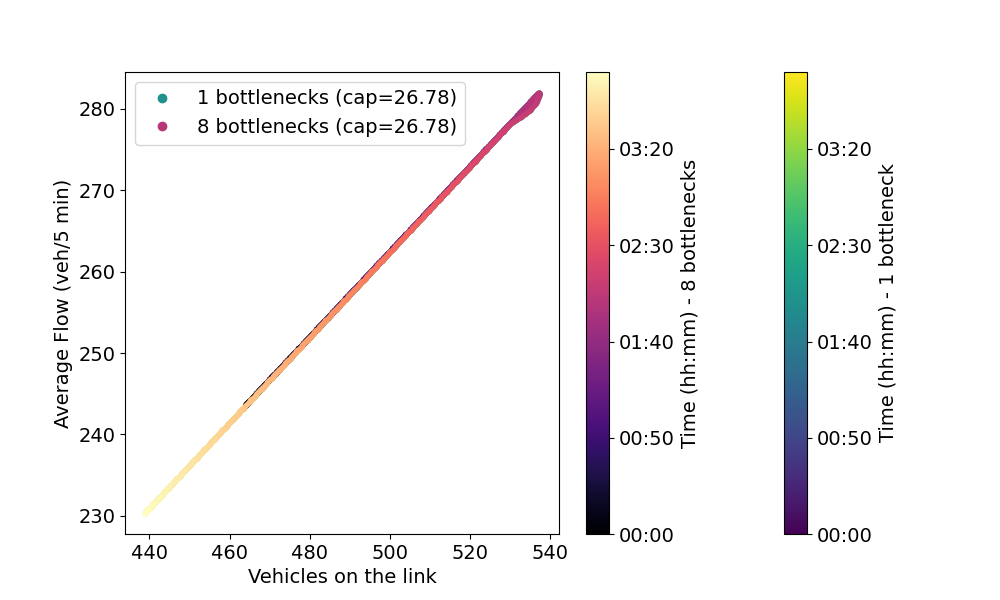}
        \caption{High Capacity Scenario}
    \end{subfigure}
    \hfill
    \begin{subfigure}[b]{0.45\textwidth}
        \centering
        \includegraphics[width=\textwidth]{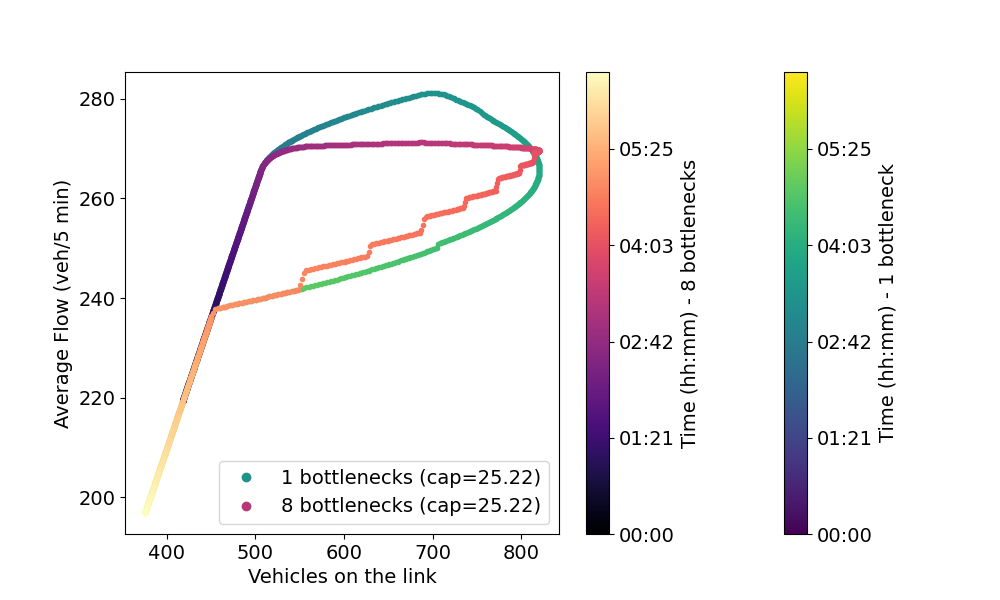}
        \caption{Low Capacity Scenario}
    \end{subfigure}
    \caption{MFDs under variations of bottleneck capacity.}
    \label{figure:mfd_cap}
\end{figure}

The results from both simulation runs demonstrate that the hysteresis area increases drastically even with modest changes in system parameters. For a single bottleneck, a mere three percent increase in demand leads to a 6.37-fold expansion of the hysteresis area, while a further three percent increase results in an additional 2.97-fold expansion. As shown in Proposition \ref{proposition:sensitivity}, there exists a quadratic relationship between demand intensity and hysteresis area. Similarly, a three percent reduction in bottleneck capacity causes a 6.26-fold increase in the area. On the other hand, a three percent increase in capacity leads to an almost complete dissipation of congestion, as shown in the left panel of Figure \ref{figure:mfd_cap}. In this case, additional bottlenecks remain inactive and therefore do not influence the hysteresis metric.

The right panels of Figures \ref{figure:mfd_dem_sens} and \ref{figure:mfd_cap} show distinct discontinuities in the lower part of the hysteresis loop. These occur when a shock wave, triggered by the dissolution of an upstream queue, propagates into and dissolves within the next downstream queue. These discontinuous jumps are absent in the continuous solution approximated by the CTM, representing a numerical artifact that has the opposite effect of the well-known problem of 'numerical diffusion' in the CTM.

The real-world traffic data shows that even during peak hours, the upstream flow varies relatively slowly and only slightly exceeds the bottleneck capacity (6,456 versus 6,240 vehicles per hour). This indicates a delicate balance between road capacity (supply) and traffic demand. Even a minor perturbation in either of these parameters leads to significant system overload. 

As shown in Tables \ref{table:hysteresis1} and \ref{table:hysteresis2}, even minor modifications to the bottleneck geometry can lead to significant hysteresis reduction without compromising overall system performance. Using parameters modeled after the original example, a reduction of 16 percent was achieved, while a hypothetical scenario with stronger congestion showed a reduction of 44.7 percent. The outflow's insensitivity to these adjustments strengthens the robustness of our counterfactual analysis, as it indicates that traffic demand remains stable despite the geometric modifications. Even modest capacity adjustments, achievable through well-established control measures, can thus contribute to improved traffic predictability and generally smoother traffic flow.

\section{Discussion and Conclusions}
\label{sec:conclusion}
This work develops analytical solutions for the time-dependent evolution of spatially aggregated traffic quantities. Empirical analyses, while not providing absolute confirmation, offer robust support for the theoretical predictions within the inherent limitations of real-world traffic data collection and analysis, including environmental noise and potential unobserved confounding variables. Despite these inherent challenges, the strong alignment between theoretical predictions and observed data suggests that our model captures the fundamental dynamics of traffic flow hysteresis at bottlenecks. The assumption of the existence of an MFD significantly simplifies the design of traffic control and management measures, as it allows for the design of time-independent strategies, using only the number of vehicles in the network as the input. The absence of such a diagram is associated with various negative effects, including decreased reliability of travel times and increased traffic instability. The time-dependence of the relationship between macroscopic traffic flow variables complicates the implementation of control measures, as they cannot be adjusted in real time. Through numerical simulations and empirical analysis, we demonstrated that smoother bottleneck geometries reduce traffic flow disruptions and improve control-relevant predictability of the system without harming its performance. As illustrated in Figure \ref{figure:mfd_cap}, a more gradual bottleneck geometry leads to lower gradients in the MFD phase plane and more gradual transitions between macroscopic traffic states. The smoother geometry allows drivers to adjust speeds more gradually, leading to more stable flow patterns during peak hours. 

Besides the geometry of the bottleneck, this work identifies two additional influencing factors on the time-dependence of relationships between macroscopic traffic flow variables:

\begin{enumerate}
    \item \textbf{The shape of the fundamental diagram:} Speed reductions caused by increasing density under uncongested conditions lead to a spatial compression of the average flow towards the upstream end. This effectively reduces the asymmetry of traffic conditions between the on- and offset of congestion. Analogously, this relationship can be explained by the ratio between the traffic flow $q$ and the characteristic speed $\frac{dq}{dk}$, as demonstrated in Proof \ref{proof:sensitivity}: A higher characteristic speed at a given flow causes the flow from the upstream boundary, where the difference between the times of congestion onset and offset at a given accumulation is greatest, to spread faster over the entire uncongested portion of the corridor. The influence of the concavity of the fundamental diagram, as shown in Section \ref{subsec:sensitivity}, is physically comparable to a spatial shift of the bottleneck in the upstream direction and thereby reduces hysteresis for the same reasons as the location-dependent increase in bottleneck capacity examined in Section \ref{sec:simulation2}. Modern applications of kinematic wave theories often assume a triangular fundamental diagram, which implies that traffic flow under uncongested conditions always proceeds with free flow speed, and influences on macroscopic quantities caused by the shape of the fundamental diagram are neglected. During the peak traffic hours analyzed in this text, locations upstream of the bottleneck often reach local flow values close to their capacity $q_c$. The time required for a state with flow $q$ starting at the upstream end of the corridor to spread to its downstream end diverges to $\infty$ as $q$ approaches $q_c$, even if the fundamental relationship $q(k)$ near $q = q_c$ deviates only slightly from a triangular form. In fact, in the case of continuous upstream boundary conditions, a capacity flow $q_c$ originating from the upstream boundary will never reach the downstream end of the corridor if flow moves according to a fundamental diagram that is differentiable at $q = q_c$.

    \item \textbf{Uneven flow increase and decrease rates ($a \neq b$):} In our piece-wise linear model, a necessary condition for the emergence of counter-clockwise hysteresis in the MFD phase plane is as follows: either $a > b$, $q_e > q_b$, or both. A less steep rate of flow decrease induces a more uniform distribution of vehicles, which in turn leads to a higher average flow at the same number of vehicles on the link. \cite{gaydag11} are the only other authors to theoretically explain counter-clockwise hysteresis in the MFD. Their explanation is based on an abstract two-bin model of network traffic, where during the onset, vehicles are loaded into the network at a constant inflow rate, but leave at a rate linearly dependent on the respective flow during the offset. This asymmetry allows, under certain conditions, for a faster decrease in the number of vehicles in the bin with a density above the critical density compared to the bin with lower density, if compared to the loading phase. The model chosen by \cite{gaydag11} can be generalized so that counter-clockwise loops are always possible when the linearly flow-dependent flow increase rate in the loading phase, $p_A$, is less than the flow decrease rate, $p_E$, in the recovery phase. This result is almost diametrically opposite compared to ours, as the model of \cite{gaydag11} does not consider local traffic flow behavior and thus does not incorporate a concept of direction. Moreover, the effects of rapid increase and decrease rates on vehicle distribution and thus average flow in the network are not considered by \cite{gaydag11}. To the best of our knowledge, \cite{He2015} are the only authors who demonstrate the occurrence of figure-eight-shaped hysteresis loops in MFDs empirically. The authors attribute counter-clockwise temporal dynamics, i.e., higher average flows at a fixed number of vehicles in the offset, to a lower rate of lane changes due to less inflow from on-ramps during this phase. As an indication for this explanation, they note that the variance of density values at the onset of congestion is lower than at a time with the same average density in the offset, contrary to the findings of the seminal empirical analysis of MFD hysteresis by \cite{gersun11}. Moreover, \cite{He2015} fit a parabolic curve to point clouds of local flow-occupancy data from two corresponding times and identify a higher local flow-density relationship in the offset phase. It is important to note however that due to the strong skewness of both the spatial density distribution during the offset of congestion and the local flow-density relationship (i.e., $q^{(3)}(k) > 0$), using variance as a measure of the spread of the data might be misleading. During the unloading phase, an active queue was still present at some detector locations $x$, i.e., $k(x,t) \gg \text{avg}_t(k)$ and $q(x) \ge \text{avg}_t(q)$. A similar relationship would arise in our model if the model parameters were chosen such that $a > b$, and in the counter-clockwise part of the MFD diagram, two states at times $t_1$, $t_2$ with $t_2 > t_1$ and equal accumulation are chosen - an active queue would only be observed at $t_2$, and $\bar{q}(t_2) > \bar{q}(t_1)$ applies although the variance of $k(x,t_2)$ is greater than that of $k(x,t_1)$ for $x \in [0,l]$. To prioritize possible causes for the observed effect, a comparison of the variance of density only at detector stations where there is no active queue would be interesting, but is not provided in the discussed research. The authors also neglected the influence of capacity drops and transition zones when comparing the respective corresponding data points, although it is known that these phenomena have a stronger influence in the offset than in the onset of congestion \cite{gersun11}. Therefore, the influence of factors leading to a higher average flow at same density in the offset of congestion should be even stronger than assumed in the article. The suspected relationship between counter-clockwise hysteresis movements and varying frequencies of lane changes is not implausible, but given the lack of empirical validation for this hypothesis, it is important not to dismiss other potential explanations. Specifically, the uniformity of vehicle distribution, particularly during uncongested conditions and when weighted by average flow, remains a plausible and parsimonious explanation based on the provided data.

\end{enumerate}

\appendix

\section{General Solution for the Model}
\label{sec:general_solution}

The following two lemmas comprehensively describe an analytical solution to the described model and, under the given conditions, allow for a simpler computation than the known approaches to the analytical solution of the LWR model. Lemma 2.1 identifies a physically correct solution in uncongested parts of the corridor, while Lemma 2.2 addresses congested conditions.

\begin{lemma}
\label{lem:characteristic} 
	For every point \((x,t)\) satisfying \( x < \psi(t) \) which is reached by at least one characteristic curve, the physically correct characteristic is the latest emanating one.
\end{lemma}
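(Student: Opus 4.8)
The plan is to identify the \emph{physically correct} solution with the entropy (vanishing-viscosity) solution of the conservation law (\ref{eq:conservation}) together with the fundamental relation (\ref{eq:fundamental}), restricted here to the free-flow branch, and to use that for a concave fundamental diagram and the piecewise-monotone trapezoidal boundary data this solution is classical -- a single characteristic through each point -- away from its shock curves (and, possibly, a rarefaction fan generated when a queue clears). Hence a point $(x,t)$ with $x<\psi(t)$ is reached by two or more distinct characteristics exactly when it lies on such a shock curve or at a caustic where one is born; for every other point the statement is vacuous. The real content of the lemma is therefore which side of an uncongested shock is physically realised.

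Step one is a purely geometric observation. If characteristics emitted from the upstream boundary at times $s_1<s_2$ both pass through $(x,t)$, then $v(s_1)(t-s_1)=v(s_2)(t-s_2)=x$ together with $t-s_1>t-s_2>0$ forces $v(s_2)>v(s_1)$, where $v(s):=Q'(k(s))$ is the wave speed of the state emitted at time $s$. Since $Q$ is concave, $Q'$ is strictly decreasing on the free-flow branch, so the later-emitted characteristic carries a strictly lower density and flow. As a by-product this already shows the coincidence can occur only with $s_1,s_2$ on the non-increasing portions of $q(0,\cdot)$, since on the strictly increasing and the constant portions the characteristics diverge or run parallel; this is where the trapezoidal demand profile enters.

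Step two invokes Lax admissibility: a shock on the free-flow branch with upstream state $(k_u,q_u)$ and downstream state $(k_d,q_d)$ is admissible iff $Q'(k_u)\ge (q_u-q_d)/(k_u-k_d)\ge Q'(k_d)$, which by concavity is equivalent to $k_u\le k_d$, i.e. the faster, lower-density state lies on the upstream side. Combined with step one, the later-emitted of the two competing characteristics is the one on the upstream side of the shock through $(x,t)$, hence the one carrying the solution's limit at $(x,t)$ taken from upstream -- the characteristic one should assign. (At a caustic the same comparison applied pairwise to all converging characteristics singles out the latest as the upstream-most.) An equivalent route is Newell's minimum principle: $N(x,t)$ is the minimum over candidate characteristics of $N(0,s)+q(s)(t-s)-k(s)x$; several characteristics reach $(x,t)$ exactly when this minimum is attained more than once (a shock point), all minimisers give the same $N$, and among them the latest carries the lowest density and, by the same entropy argument, the upstream state.

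The hard part is making step two airtight for a general point, which amounts to verifying that the solution's shock structure really is as described -- in particular that the compression shock born at the onset of the demand decrease is admissible, locating it relative to $\psi$, and handling its eventual merger with the queue tail -- routine but requiring one to go through the phases of $q(0,\cdot)$ case by case. A secondary technicality is characteristics not emitted directly from $x=0$ but born at an interior shock or rarefaction fan (for instance the front released when an upstream queue dissolves in a multi-bottleneck layout); these are handled by tracing the emanation point back along the shock or fan to the upstream inflow that produced it, after which step one's comparison applies verbatim and again picks out the latest-emitted characteristic.
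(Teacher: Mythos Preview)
Your route is genuinely different from the paper's. The paper discretises the decreasing branch of the boundary flow into finitely many constant steps, so that each step launches an explicit shock; it then argues that if two straight characteristic lines $c_1$ (earlier) and $c_2$ (the most recent) meet at $(x,t)$, the earlier one must have been overtaken by one of these step-shocks en route and is therefore not valid, with the continuous boundary recovered as the step width tends to zero. This is elementary and constructive, needing no entropy machinery, though the passage to the limit is only sketched. You instead invoke the entropy solution directly: the geometric observation that the later of two intersecting characteristics carries the lower density, together with the Lax condition for a concave-flux shock, identifies the later characteristic with the upstream state; your alternative via Newell's variational principle is closer in spirit to how the lemma is actually used downstream in Lemma~A.2.

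One point to flag: your reduction ``two or more characteristics through $(x,t)$ exactly when $(x,t)$ lies on a shock or caustic'' is correct for \emph{entropy} characteristics (which terminate at shocks), but the paper's proof and accompanying figure work with \emph{naive} straight lines extended regardless of shocks---and that is also the reading under which the lemma is useful as a selection rule. Under the naive-line reading, a point strictly downstream of a free-flow compression shock but still satisfying $x<\psi(t)$ is reached by several naive lines, the latest of which has already been absorbed by the shock and carries the wrong state; the \emph{earlier} line is correct there. Neither the paper's argument (which verifies that $c_1$ crossed a shock but not that $c_2$ did not) nor your Lax step (which places the later characteristic on the upstream side of the shock without checking that $(x,t)$ itself lies on that side) disposes of this case. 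The ``routine'' phase-by-phase analysis you defer is exactly where this has to be settled---either by showing that, in the trapezoidal model, the compression shock never forms strictly upstream of $\psi(t)$, or by restricting the lemma's scope.
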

\begin{proof}[Proof]
    The lemma is proven by analyzing a discretized approximation of the upstream boundary condition. Assuming the concavity of \(q(k)\), characteristics may intersect only if originating from the descending part of the boundary. We partition the decreasing branch into intervals \(I_1, \ldots, I_n\). Define \(q_{\text{discr}}(t)\) as \(q_{\text{up}}(t_I)\) where \(t_I\) is the lower bound of the interval \(I\) containing \(t\). Additionally, we linearize \(q(k)\) over the decreasing branch. Suppose two characteristic lines intersect at \((x, t)\), with \(c_1\) from \(I_1\) and \(c_2\) from \(I_2\), \(I_1 < I_2\), and \(c_2\) is most recent. \(c_1\) must have crossed a shockwave, representing the physically valid solution at this point in space-time. The continuous boundary condition solution derives from this discretization method as intervals approach zero length.
\end{proof}

Figure~\ref{fig:approximation_method} illustrates the lemmas approximation method. Figure~\ref{fig:transformation} displays the transformation of a continuously decreasing boundary flow (blue) into three (red) or six (green) discrete steps. Figures~\ref{fig:solution_a} and~\ref{fig:solution_b} show the resulting solutions. These steps propagate as shock waves. The characteristics that intersect the point $(\frac{4}{3}, 25)$ are shown as dotted lines. It is straightforward to verify graphically that only the later emanating characteristic represents a feasible solution of the LWR theory in both cases.

\begin{figure}[H]
	\centering
	\begin{subfigure}{.3\textwidth}
		\centering
		\includegraphics[height=1\linewidth]{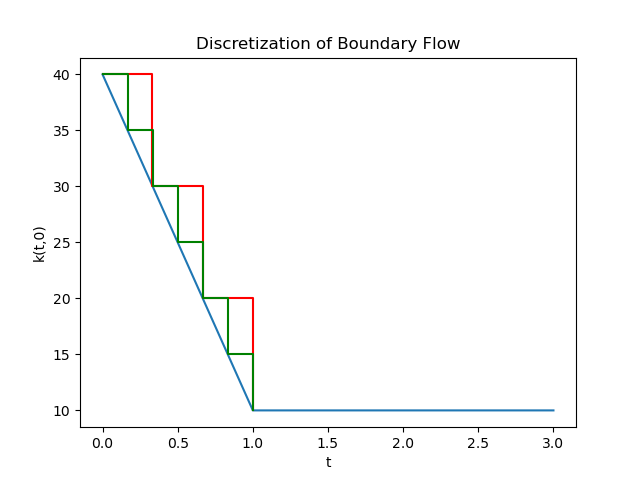} 
		\caption{Discretized flow}
		\label{fig:transformation}
	\end{subfigure}%
	\hfill
	\begin{subfigure}{.3\textwidth}
		\centering
		\includegraphics[height=1\linewidth]{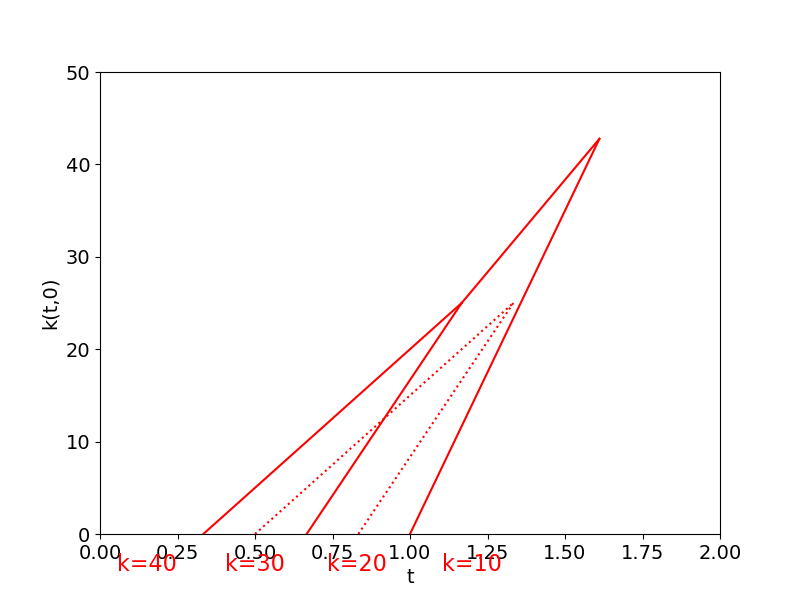} 
		\caption{Approximate solution 1}
		\label{fig:solution_a}
	\end{subfigure}%
	\hfill
	\begin{subfigure}{.3\textwidth}
		\centering
		\includegraphics[height=1\linewidth]{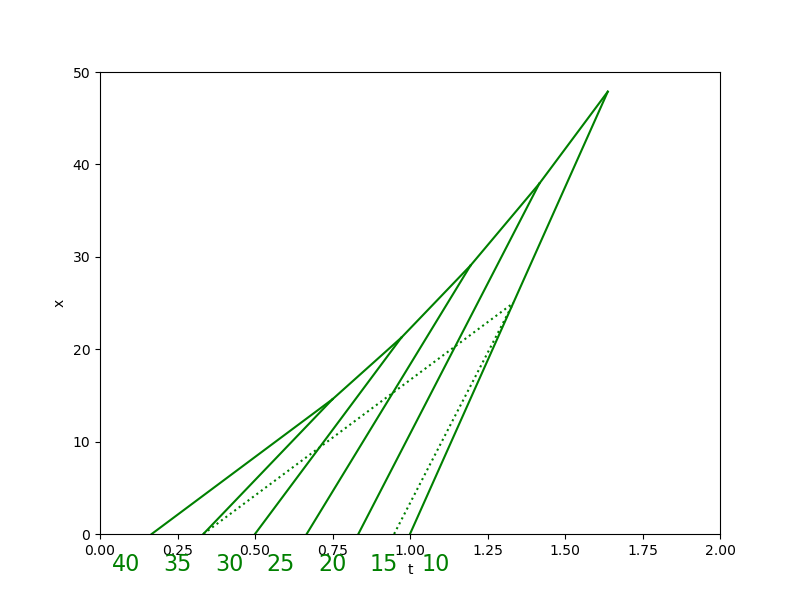} 
		\caption{Approximate solution 2}
		\label{fig:solution_b}
	\end{subfigure}
	\caption{Visualization of the discretized flow and approximate solutions}
	\label{fig:approximation_method}
\end{figure}

\begin{lemma}
    Denote as \( \hat{t} \) the time instant at which the physically correct characteristic reaching $\psi(t)$ at $t$ starts from the upstream boundary.The trajectory of the tail of the queue in space-time, \( \psi(t) \), is then given by the formula:
	\begin{equation}
		\label{eq:psi}
		\psi(t) = l - \left(v(k(0,t_0)) - \frac{dq}{dk}(0,\hat{t})\right) \cdot k(0,t_0(\psi(t), t)) - N_0 + q_b \cdot (t - t_b),
	\end{equation}
\end{lemma}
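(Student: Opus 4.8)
The plan is to treat $\psi(t)$ as the trajectory of the single shock separating the free-flow region upstream from the queue behind the bottleneck, and to determine it by combining the geometry of the incoming characteristic (supplied by Lemma \ref{lem:characteristic}) with conservation of the vehicles stored in the queue. On the downstream side of $\psi(t)$ the state is the uniform congested state $(k_{bn},q_{bn}(l))$: since $q_{bn}$ is non-increasing, the binding constraint anywhere in $[x_0,l]$ is the one at $x=l$, so the whole interval $[\psi(t),l]$ is backed up at flow $q_{bn}(l)$ and hence at density $k_{bn}$. On the upstream side, the physically correct characteristic is, by Lemma \ref{lem:characteristic}, the latest one arriving at $(\psi(t),t)$; write $\hat t=t_0(\psi(t),t)$ for its departure time from the upstream boundary, so it carries density $k(0,\hat t)$, flow $q(0,\hat t)$, and satisfies $\psi(t)=\frac{dq}{dk}(0,\hat t)\,(t-\hat t)$.

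Next I would impose continuity of the cumulative count $N$ at $(\psi(t),t)$. Evaluated along the incoming characteristic, $N(\psi(t),t)=N(0,\hat t)+\bigl(q(0,\hat t)-\frac{dq}{dk}(0,\hat t)\,k(0,\hat t)\bigr)(t-\hat t)$, with $N(0,\hat t)$ obtained by integrating the trapezoidal boundary datum from $t_b$; evaluated from the queue, $N(\psi(t),t)=N(l,t)+k_{bn}\bigl(l-\psi(t)\bigr)$, with $N(l,\cdot)$ growing at the constant discharge rate $q_{bn}(l)$. Equating these, using the characteristic relation to replace $t-\hat t$ by $\psi(t)/\frac{dq}{dk}(0,\hat t)$, and solving for $l-\psi(t)$ --- equivalently for the vehicle content of the queue, $k_{bn}\bigl(l-\psi(t)\bigr)$ --- produces the stated closed form; in particular the factor $v(k(0,\hat t))-\frac{dq}{dk}(0,\hat t)$ enters as $q(0,\hat t)/k(0,\hat t)-\frac{dq}{dk}(0,\hat t)$, the excess of vehicle speed over wave speed in the incoming free-flow state, while the remaining terms collect the cumulative inflow (through $N_0$, $q_b$, $t_b$) against the cumulative discharge.

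The step I expect to be the main obstacle is that this is an \emph{implicit} equation for $\psi(t)$: the departure time $\hat t=t_0(\psi(t),t)$, and with it $k(0,\hat t)$, $q(0,\hat t)$ and $\frac{dq}{dk}(0,\hat t)$, depend on the very quantity being solved for, so one must show the resulting fixed-point relation has a unique admissible root. Concavity of $Q$ together with Lemma \ref{lem:characteristic} does this: it guarantees that characteristics issuing from the decreasing branch collapse into shocks with exactly the latest one surviving at the tail, which makes $t_0(\cdot,t)$ monotone in its first argument and renders the equation uniquely solvable. The rest is bookkeeping that pins down the coefficients --- a case split over the four linear pieces of $q(0,\cdot)$ (the piece containing $\hat t$ determines $k(0,\hat t)$, $\frac{dq}{dk}(0,\hat t)$ and the sign of $q(0,\hat t)-q_{bn}(l)$, which is positive while the queue grows and negative while it recedes), identification of the instant the queue forms and of the downstream reference count (fixing $t_b$ and $N_0$), and the degenerate case in which the queue reaches $x=0$, where the boundary flow is simply capped at $q_{bn}(l)$ and the derivation is otherwise unchanged.
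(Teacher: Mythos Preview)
Your approach is essentially the paper's: both compute $N(\psi(t),t)$ two ways and equate them, once along the incoming characteristic from $(0,\hat t)$ and once through the congested region via $(l,t)$ (the paper phrases this as two space--time paths from $(0,0)$, but the content is identical). You supply considerably more detail than the paper's three-sentence sketch --- in particular the discussion of implicitness and the monotonicity argument for uniqueness of the fixed point, which the paper does not address --- so your write-up is, if anything, more complete.
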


\begin{proof}[Proof]
	Consider two paths in space-time to reach the point $(\psi(t), t)$: The first path progresses from $(0,0)$ to $(l,0)$, continues to $(l,t)$, and concludes at $(\psi(t), t)$. The second path starts at $(0,0)$, moves directly to $(0,\hat{t})$, and then reaches $(\psi(t), t)$. Since both paths lead from $(0,0)$ to $(\psi(t), t)$, they must both yield the value $N(\psi(t), t)$.
    By setting the integrals along these paths equal and solving for $\psi(t)$, the desired formula is obtained.

\end{proof}

\section{Proof of proposition \ref{proposition:sensitivity} and proposition \ref{proposition:NEF_max}}
\label{section:appendix_proofs}

\begin{proof}[Proof of proposition \ref{proposition:sensitivity}]
\label{proof:sensitivity}
\textbf{Part (a)}: Fix two time points \( t_1 \) and \( t_2 \) with \( t_1 < t_2 \), both having the same accumulation in the interval with an active queue and denote $\bar{q}_f(t)$ the spatial average of local flow on the uncongested portion of the link at time $t$. Then, the following expressions hold:
\[
\bar{q}(t_1) = \left(\frac{l-\psi(t_1)}{l}\right) q_{bn} + \left(\frac{\psi(t_1)}{l}\right) \bar{q}_f(t_1)
\]
and
\[
\bar{q}(t_2) = \left(\frac{l-\psi(t_2)}{l}\right) q_{bn} + \left(\frac{\psi(t_2)}{l}\right) \bar{q}_f(t_2),
\]
where \( \bar{q}_f(t_1) > \bar{q}_f(t_2) \) since at time \( t_2 \) the offset of congestion (\( q(0,t) < q_{bn} \)) has already started. To maximize \( \bar{q}(t_1) - \bar{q}(t_2) \), the queue length at \( t_1 \) and \( t_2 \), \( \psi(t_1) \) and \( \psi(t_2) \), respectively, must be minimized. Therefore, regardless of the specific functional form of the fundamental diagram, the extent of hysteresis increases with the queuing density \( \max q^{-1}(q_{bn}) \). \( \bar{q}(t) = \bar{q}_f(t) \) can then be assumed due to the resulting marginal spatial spread of the queue.

Let \( t_{\text{on}} \) be the time when a queue first forms, and \( t_{\text{off}} \) the earliest time at which the queue has completely dissipated. The extent of hysteresis is given by
\[
\begin{aligned}
H(t) &= \int_{A(t_{on})}^{A(t_{max})} \bar{q}_{\text{on}}(A) \, dA - \int_{A(t_{max})}^{A(t_{off})} \bar{q}_{\text{off}}(A) \, dA \\
&= \int_{t_{\text{on}}}^{t_{\max}} \bar{q}(t) A'(t) \, dt - \int_{t_{\max}}^{t_{\text{off}}} \bar{q}(t) (-A'(t)) \, dt.
\end{aligned}
\]
Assume an arbitrary form for \( q(k) \) such that there exist \( \tilde{t}_1 \) and \( \tilde{t}_2 \) for which \( q'(0, \tilde{t}_1)=v_f \) and \(v_f > q'(0, \tilde{t}_2) \). Denote by \( \Delta \tau \) the difference in travel time of the characteristic lines leaving the upstream boundary at \( \tilde{t}_1 \) and \( \tilde{t}_2 \), respectively. Define
\[
Q^{\text{on}} := \int_{t_{\text{on}}}^{t_{\max}} \bar{q}(t) A'(t) \, \mathrm{d}t,
\]
and
\[
Q^{\text{off}} := \int_{t_{\max}}^{t_{\text{off}}} \bar{q}(t) (-A'(t)) \, \mathrm{d}t.
\]
Note that in intervals with an active queue, it always holds that \( A'(t) = q(0,t) - q_{\text{bn}} \), and therefore both $
\int_{t_{\text{on}}}^{r_1} A'(t) \, \mathrm{d}t$ and $\int_{r_2}^{t_{\text{off}}} A'(t) \, \mathrm{d}t$ are independent of the choice of \( q(k) \) for all \( r_1, r_2 \leq t_{\text{off}} \). To show that the extent of hysteresis increases when \( q(k) \) is adjusted such that \( \frac{\mathrm{d}q}{\mathrm{d}k}(0, \tilde{t}_2)=v_f \), it is therefore sufficent to demonstrate that the change in \( Q^{\text{on}} \), \( \Delta Q^{\text{on}} \), caused by this adjustment is greater than the change in \( Q^{\text{off}} \), \( \Delta Q^{\text{off}} \).

We distinguish between two cases:
\begin{enumerate}
  \item If \( a \geq b \): At the onset of congestion, the average flow at times \( t \) with \( t \geq t_1 + \Delta\tau \) increases by \( 0.5a \) until \( t_{pe} \), within an interval at least as large as \( (q_p-q_0)/a \). In the interval \( [t_{pe}, t_{bn}] \), the flow increases by \( 0.5b \), the length of the interval being \( (q_p-q_0)/b \). In summary, the change in the integral of the average flow at the onset of congestion can be expressed as:
  \[
  \Delta Q^{\text{on}} \geq \frac{1}{2}a\left(\frac{q_p-q_0}{a}\right) - \frac{1}{2}a\Delta\tau + \frac{q_p-q_b}{b} \cdot \frac{1}{2}b = \frac{q_p-q_0}{2} - \frac{1}{2} a \Delta\tau + \frac{q_p-q_b}{2}.
  \]
  The change in accumulated average flow during the offset of congestion is similarly given by
  \[
  \Delta Q^{\text{off}} \leq \frac{q_p-q_0}{2} - \frac{1}{2}b\Delta\tau - \frac{q_p-q_b}{2}.
  \]
  It follows that
  \[
  \Delta Q^{\text{on}} - \Delta Q^{\text{off}} \geq \frac{1}{2}b\Delta\tau - \frac{1}{2}a\Delta\tau + (q_p-q_b).
  \]
  Assuming the congestion behind the bottleneck has already begun at the time when the inflow at the upstream end falls below the bottleneck's capacity, the following maximum travel time associated with the bottleneck flow \( q_{bn} \) holds:
  \[
  \Delta\tau(q_0) \leq \tau(q_0) \leq \tau(q_{bn}) \leq \frac{q_p-q_{bn}}{a}.
  \]
  Inserting into the formula above, we obtain
  \[
  \Delta Q^{\text{on}} - \Delta Q^{\text{off}} \geq (q_p-q_b) - \frac{(a-b)}{a} \cdot (q_p-q_b) \geq 0.
  \]
  Since the integral over \( A'(t) \) in the onset interval must be equal to that in the offset interval, it follows that \( H \) increases due to the described manipulation.
  \item If \( a < b \): Since the flow at the upstream end in this case changes less during the interval \( [t_1, t_1 + \Delta\tau] \) compared to \( [t_2, t_2 + \Delta\tau] \), the following holds:
\begin{align*}
\Delta H &= \Delta \int_{A_0}^{A_{\max}} \bar{q}(A) \, dA - \Delta \int_{A_{\text{off}}}^{A_{\max}} \bar{q}(A) \, dA \\
         &\geq \frac{1}{2} \cdot a \cdot \frac{q_p - q_0}{a} - \frac{1}{2} \cdot b \cdot \frac{q_p - q_0}{b} \\
         &= 0.
\end{align*}

\end{enumerate}

\textbf{Part (b)} \label{proof:prop33}: To derive a more easily interpretable formula, we assume that the travel time \(\tau\) under free-flow conditions is small compared to the duration of the congestion dissipation phase \(t_{pe} - t_{e}\). If this condition is not met, the area under the hysteresis curve will be slightly larger. However, the resulting formula remains quadratic in the peak demand \(q_p\).

We again use the identity
\[
H(t) = \int_{t_{\text{on}}}^{t_{\max}} \bar{q}(t) A'(t) \, dt - \int_{t_{\max}}^{t_{\text{off}}} \bar{q}(t) (-A'(t)) \, dt.
\]

In the asymptotic case of arbitrarily high peak demand, \(H\) can be expressed as

\begin{align*}
H(t) &=  \int_{\tau}^{t_{e}} \bar{q}(t) A'(t) \, dt - \int_{t_{e}}^{t_{\text{off}}} \bar{q}(t) (-A'(t)) \, dt \\
&\in \Theta \left( \int_{\tau}^{t_{e}} \bar{q}(t) A'(t) \, dt \right) = \Theta \left( \int_{\tau}^{t_{e}} \bar{q}(t) (q(0,t) - q_{bn}) \, dt \right) \\
&= \Theta \Bigg( \int_{\tau}^{t_{pb}} \bar{q}(t) (q(0,t) - q_{bn}) \, dt 
+ \int_{t_{pb}}^{t_{pb+\tau}} \bar{q}(t) (q(0,t) - q_{bn}) \, dt \\
& \quad + \int_{t_{pb}+\tau}^{t_{pe}} \bar{q}(t) (q(0,t) - q_p) \, dt 
+ \int_{t_{pe}}^{t_{pe+\tau}} \bar{q}(t) (q(0,t) - q_{bn}) \, dt  \\
& \quad + \int_{t_{pe}+\tau}^{t_{e}} \bar{q}(t) (q(0,t) - q_p) \, dt \Bigg).
\end{align*}

The first equality holds because, in the case of arbitrarily high peak demand, congestion at the bottleneck starts as soon as the vehicle starting at \(t=0\) reaches the downstream end. Additionally, the upstream boundary flow only falls below the capacity of the bottleneck at \(t=t_e\), hence \(t_{\max}=t_e\).

The second equality holds because, for \(t \ge t_e\), neither the average flow nor \(A'(t)\) depend on \(q_p\). Therefore, the term \(\int_{t_{e}}^{t_{\text{off}}} \bar{q}(t) (-A'(t)) \, dt\) has no impact on the complexity class of \(H(t)\).

The third equality holds because, during periods of active congestion downstream of the bottleneck, \(A'(t) = q(0,t) - q(l,t) = q(0,t) - q_{\text{bn}}\).

The fourth equality results from dividing the integral into the individually calculated intervals. 

Since the queue occupies only a marginal spatial portion of the corridor, its effect on the spatial average of the flow can be ignored. Hence, \(\bar{q}(t)\) can be calculated as the average value of the upstream boundary flow between \(t\) and \(t - \tau\).

The individual summands of the integral can then be evaluated as follows. Note that both \(a = \frac{q_p - q_b}{t_{pb}}\) and \(b = \frac{q_p - q_e}{t_e - t_{pe}}\) depend linearly on \(q_p\). Therefore, their coefficients must be considered as part of the quadratic growth of \(H\) when squared or multiplied by \(q_p\). The formula given in the proposition can then be derived by summing the individual components and collecting terms.

\begin{enumerate}[label=\alph*.]
    \item
    \begin{align*}
    &\int_{\tau}^{t_{pb}} \bar{q}(t) (q(0,t) - q_{\text{bn}}) \, dt 
    = \int_{\tau}^{t_{pb}} (q(0,t) - 0.5 a \tau) (q(0,t) - q_{\text{bn}}) \, dt \notag \\
    &= \int_{\tau}^{t_{pb}} (q(0,t) - 0.5 a \tau) q(0,t) \, dt + \Theta (q_p) \notag \\
    &= \int_{\tau}^{t_{pb}} (q(0,t) - 0.5 a \tau) q(0,t) \, dt +\Theta(q_p) \notag \\
    &= \int_{\tau}^{t_{pb}} \left( (q_b + a t)^2 - 0.5 a \tau (q_b + a t) \right) \, dt +\Theta (q_p) \notag \\
    &= \int_{\tau}^{t_{pb}} \left( a^2 t^2 - 0.5 a^2 \tau^2 \right) \, dt +\Theta(q_p) \notag \\
    &=  a^2 \left( \frac{(t_{pb} - \tau)^3}{3} \right) - 0.5 a^2 \tau^2 (t_{pb} - \tau) + \Theta (q_p)
\end{align*}

    \item \begin{align*}
        \int_{t_{pb}}^{t_{pb} + \tau} \bar{q}(t) (q(0,t) - q_{\text{bn}}) \, dt &\in   \tau \frac{q_p^2}{2} + \Theta(q_p)
    \end{align*}

    \item \begin{align*}
        \int_{t_{pb}}^{t_{pb} + \tau} \bar{q}(t) (q(0,t) - q_{\text{bn}}) \, dt &\in   q_p^2 \cdot (t_{pe}-t_{pb}+\tau) +\Theta(q_p)
    \end{align*}

    \item
    \begin{align*}
        &\int_{t_{pe}}^{t_{pe}+\tau} \bar{q}(t) (q(0,t) - q_{\text{bn}}) \, dt \\&\in \int_{t_{pe}}^{t_{pe}+\tau} (q_p - b \cdot \frac{t - t_{pe}}{2} (q_p - b \cdot (t - t_{pe})) \, dt + \Theta(q_p) \notag \\
        &= q_p^2 \tau - \frac{3}{4} q_p b \tau^2 + \frac{b^2 \tau^3}{6}+\Theta(q_p)
    \end{align*}

    \item
    \begin{align*}
        &\int_{t_{pe+\tau}}^{t_{e}} \bar{q}(t) (q(0,t) - q_{\text{bn}}) \, dt \in q_p^2 (t_e - (t_{pe} + \tau)) - q_p b \left( (t_e - t_{pe})^2 - \tau^2 \right)\\& + \frac{b^2}{3} \left( (t_e - t_{pe})^3 - \tau^3 \right) - 0.5 b \tau q_p (t_e - (t_{pe} + \tau))\\& + 0.25 b^2 \tau \left( (t_e - t_{pe})^2 - \tau^2 \right)
        +\Theta(q_p)
    \end{align*}

\end{enumerate}

\end{proof}

\begin{proof}[Proof of proposition \ref{proposition:NEF_max}]
Under the given conditions, the higher of these two values is reached at the previously defined time $t_\text{max}$. For the lower value, we introduce the notation $t_\text{min}$, noting that $t_\text{min} > t_\text{max}$. To identify the appropriate form of the fundamental diagram, we seek:

\begin{equation*}
\begin{aligned}
\arg \max \{&A(t_\text{max}) - A(t_\text{min})\} = \arg \max \{N(0, t_\text{max}) - N(l, t_\text{max}) \\
&\quad - (N(0, t_\text{min}) - N(l, t_\text{min}))\} = \arg \max \{N(0, t_\text{max}) - N(0, t_\text{min}) \\
&\quad - (N(l, t_\text{max}) - N(l, t_\text{min}))\} = \arg \max \{N(0, t_\text{max}) - N(0, t_\text{min}) \\
&\quad - q_\text{bn} \cdot (t_\text{max} - t_\text{min})\} = \arg \max \{q_\text{bn} \cdot t_\text{min} - N(0, t_\text{min})\}
\end{aligned}
\end{equation*}

The last transformation results from the fact that the time $t_\text{max}$ is independent of the form of the fundamental diagram. $t_\text{min}$ is the time at which the queue completely dissipates, thus optimizing the above expression requires maximizing $t_\text{min}$. Since integration along different paths leads to the same result for the cumulative flow, we have $
N(0,t_{\text{min}}) + A(t_{\text{min}}) = N(1,t_{\text{bn,start}}) + q_{\text{bn}} \cdot (t_{\text{min}} - t_{\text{bn,start}}),
$ where \(t_{\text{bn,start}}\) denotes the time of the onset of the bottleneck. This implies that \(t_{\text{min}}\) decreases in \(A(t_{\text{min}})\) and \(N(1,t_{\text{bn,start}})\), while it increases in \(t_{\text{bn,start}}\). The time \(t_{\text{bn,start}}\) can be maximized, without affecting the other mentioned parameters, by maximizing the travel time of the characteristic associated with flow \(q_{\text{bn}}\),
$\frac{dk}{dq}(q_{\text{bn}}),$ under the given conditions. This leads to $\frac{dk}{dq}(q_{\text{bn}}) = t_{\text{pb}} - t_{\text{bn}},$
in the case of maximum NEF hysteresis,
where \(t_{\text{bn}} := \min \{t \mid q(0,t) = q_{\text{bn}}\}\). To determine the form of \(k(q)\) for \(q < q_{\text{bn}}\), we note that \(N(1,t_{\text{bn,start}}) = N(0,t_{\text{bn,start}}) + l \cdot \frac{dN}{dx}\), where \(\frac{dN}{dx}(q_{\text{bn}}) = q_{\text{bn}} \cdot \frac{dk}{dq} - k\). Thus, an increase of \(k(q_{\text{bn}})\) by one unit leads to an increase of \(N(l,t_{\text{bn,start}})\) by \(l\) units. If the travel time of the characteristic leaving the upstream end at \(t_{\text{min}}\) is zero, it is, according to \hyperref[lem:characteristic]{Lemma A.1}, always the physically correct characteristic, and we have \(A(t_{\text{min}}) = 0\). Furthermore, we have \(k(q_{\text{bn}}) = \int_0^{q_{\text{bn}}} \frac{dk}{dq} dq \leq k(q(0,t_{\text{min}})) + (t_{\text{pb}} - t_{\text{bn}}) \cdot (q_{\text{bn}} - q(0,t_{\text{min}}))\), where the inequality is based on the convexity of \(k(q)\). To maximize the NEF hysteresis, it should hold that \(\frac{dk}{dq}(q) = \frac{t_{\text{pb}} - t_{\text{bn}}}{l}\) for \(q > q(0,t_{\text{min}})\), in order to minimize \(\frac{dN}{dx}(q_{\text{bn}})\). An increase in \(k(q)\) in the interval \([0, q(0,t_{\text{min}})]\) by \(r\) results in a reduction of the cumulative flow at the downstream end by \(r \cdot l\) at time $t_{\text{bn,start}}$, but simultaneously leads to an increase in downstream cumulative flow at \(t_{\text{min}}\) by at least the same amount. This follows from the fact that the new arrival time of the characteristic starting at \(t_{\text{min}}\), denoted as \(t_{\text{min}}^+\), yields \(N(l,t_{\text{min}}^+) \geq r \cdot l\) when integrating along this characteristic, and \(A'(t) \leq 0\) holds in this interval (whether the actual cumulative flow \(N(l,t_{\text{min}}^+)\) is lower due to a shockwave is irrelevant for this argument). Thus, the form of the fundamental diagram that maximizes hysteresis is:
\(k(q) = 0\) for \(0 \leq q \leq q(0,t_{\text{min}})\), and
\(k(q) = (t_{\text{pb}} - t_{\text{bn}}) \cdot (q - q(0,t_{\text{min}}))\) for \(q(0,t_{\text{min}}) \leq q\). For the hysteresis metric, we obtain \(A(t_{\text{max}}) - A(t_0) = A(t_{\text{max}}) = N(0,t_{\text{max}}) - N(l,t_{\text{pb}}) - (t_{\text{max}} - t_{\text{pb}}) \cdot q_{\text{bn}}\). A closed-form representation using only the variables defined in previous chapters is unfortunately not possible. 
\end{proof}

\section*{CRediT authorship contribution statement}

Alexander Hammerl; Conceptualization, Methodology, Formal analysis and coding, Data curation, Investigation, Validation, Visualization, Writing; Original draft, review and editing.

Ravi Seshadri; Conceptualization, Supervision, Writing; review

Thomas Kjær Rasmussen; Conceptualization, Review and Funding acquisition.

Otto Anker Nielsen; Conceptualization, Review, Funding acquisition and project management

\section*{Declaration of Competing Interests}
The authors declare that they have no known competing financial interests or personal relationships that could have appeared
to influence the work reported in this paper.

\section*{Declaration of generative AI and AI-assisted technologies in the writing process}
During the preparation of this work the authors used ChatGPT, ClaudeAI in order to improve stylistic aspects and the overall readability of the submitted documents. After using these services, the authors reviewed and edited the content as needed and take full responsibility for the content of the published article.

 \bibliographystyle{elsarticle-num} 
 \bibliography{cas-refs}





\end{document}